\newtheorem{lemma}{Lemma}
\definecolor{s1}{rgb}{0.880722, 0.611041, 0.142051}
\definecolor{s2}{rgb}{0.368417, 0.506779, 0.709798}
\definecolor{n1}{rgb}{0.560181, 0.691569, 0.194885}
\definecolor{n2}{rgb}{0.922526, 0.385626, 0.209179}
\newcommand{\id}{\mathds{1}}
\newcommand{\cF}{\mathcal{F}}
\newcommand{\ii}{\mathrm{i}}
\begin{document}

\title{Optimal distributed multiparameter estimation in noisy environments}
\author{Arne Hamann\,\orcidlink{0000-0002-9016-3641}}
 \affiliation{Universit\"at Innsbruck, Institut f\"ur Theoretische Physik, Technikerstra{\ss}e 21a, 6020 Innsbruck, Austria}

\author{Pavel Sekatski\,\orcidlink{0000-0001-8455-020X}}
 \affiliation{Department of Applied Physics University of Geneva, 1211 Geneva, Switzerland}

\author{Wolfgang D\"ur\,\orcidlink{0000-0002-0234-7425}}
 \affiliation{Universit\"at Innsbruck, Institut f\"ur Theoretische Physik, Technikerstra{\ss}e 21a, 6020 Innsbruck, Austria}

\date{\today}

\begin{abstract}
We consider the task of multiple parameter estimation in the presence of strong correlated noise with a network of distributed sensors. We study how to find and improve noise-insensitive strategies.
We show that sequentially probing GHZ states is optimal up to a factor of at most 4. This allows us to connect the problem to single parameter estimation, and to use techniques such as protection against correlated noise in a decoherence-free subspace, or read-out by local measurements.
\end{abstract}

\maketitle
\section{Introduction}

Measurements lie at the heart of natural sciences, and play a fundamental role in all disciplines of physics and beyond. Quantum metrology offers the possibility to measure physical quantities of interest with quadratically enhanced precision as compared to classical approaches. This offers a huge potential for upcoming quantum technologies as an important tool to enhance sensing devices. When combining several sensors to a sensor network, quantities with different spatial or temporal dependence become directly accessible, offering increased freedom and possibilities not available in local sensing schemes. For a fixed quantity of interest with a specific spatial dependence, one can design a quantum metrology protocol in such a way that one estimates the quantity with maximal precision. This typically involves the preparation of a specific, multipartite entangled state distributed over multiple sensors located at different positions, its free evolution to imprint the quantity of interest, and the measurement of the state. The choice of state determines the quantity that can be sensed, while at the same time one can be insensitive to other signal or noise sources provided they have a different spatial dependence. This has been utilized in~\cite{sekatskiOptimalDistributedSensing2020, wolkNoisyDistributedSensing2020} to obtain protocols that are insensitive to multiple noise sources, or even noise originating from whole regions in space.

Here we consider the problem of sensing multiple such non-local quantities of interest (or linear functions thereof) simultaneously, i.e. a multi-parameter sensing problem. One may encode all quantities of interest in some initial state, which is then measured. However, in many relevant cases all parameters can not be optimally encoded or read out simultaneously. This makes it difficult to design optimal multi-parameter sensing protocols, i.e. find optimal initial states and measurement strategies that are generally applicable. In this work we restrict ourselves to commuting classical fields with different spatial dependencies, e.g. the different components of a Taylor or Fourier series, or distant-dependent fields emitted by sources located at various positions. In addition, we assume the presence of multiple strong noise sources that are spatially correlated, however with a different spatial dependence than the signals. This distinguishes our scenario from~\cite{rubioQuantumSensingNetworks2020,bringewattProtocolsEstimatingMultiple2021}, and requires the usage of sensing strategies that can cope with such imperfections.

Surprisingly, we find that a sequential approach where linear combinations of the parameters are measured individually is close to optimal. This is true for arbitrary figures of merit where quantities are weighted accordingly, and for particular choices one can indeed construct optimized protocols. Importantly, this allows one to use simple single-parameter sensing schemes that are based on the usage of GHZ states.
In particular, one can construct schemes that utilize a decoherence free subspace~\cite{sekatskiOptimalDistributedSensing2020} which allows for full elimination of spatially correlated noise, while at the same time maintaining the capability to sense quantities of interest. The single parameter sensing scheme has the additional advantage that the required initial state can be efficiently prepared and is the same in terms of required entanglement structure for all quantities of interest - up to some additional local operations before and during the sensing process. Furthermore, the readout can be implemented locally, i.e. no complicated entangled measurement is required, but local measurements of the individual qubits suffice. This shows that a simple, experimentally feasible scheme is almost optimal, and allows one to deal with relevant noise and imperfections efficiently. We also show that amongst all noise-insensitive strategies these sequential GHZ strategies are optimal up to at most a factor, which does not dependent on the number of sensors, signals, and noise sources.

The paper is organized as follows. We provide some background information of multiparameter sensing problems and the quantum fisher information matrix in Sec. \ref{sec:background}. There we also describe the problem setting we consider, review some known methods and provide new methods to compute quantities of interest. In Sec. \ref{sec:methods} we describe and analyze a sequential strategy, and discuss how it can be optimized w.r.t. different figures of merit. We prove the optimality of such sequential strategies over generic approaches within the decoherence free subspace in Sec. \ref{sec:results}, by showing that for any general protocol there always exists a sequential strategy that perform equal or better. We illustrate our results with help of some examples in Sec. \ref{sec:example} and summarize and conclude in Sec. \ref{sec:conclusion}.

\section{Background}\label{sec:background}

Quantum metrology investigates how to use quantum systems to estimate properties of physical systems. In recent years, quantum sensor networks gained increasing interest as a natural extension~\cite{sidhuGeometricPerspectiveQuantum2020} of multi-parameter quantum metrology~\cite{giovannettiAdvancesQuantumMetrology2011} and as a promising application for quantum networks.
They can be used for estimating local parameters~\cite{knottLocalGlobalStrategies2016,proctorMultiparameterEstimationNetworked2018}, function of local parameters~\cite{eldredgeOptimalSecureMeasurement2018,qianHeisenbergscalingMeasurementProtocol2019,geDistributedQuantumMetrology2018,rubioQuantumSensingNetworks2020,bringewattProtocolsEstimatingMultiple2021} and field properties~\cite{sekatskiOptimalDistributedSensing2020,hamannApproximateDecoherenceFree2022,wolkNoisyDistributedSensing2020,qianOptimalMeasurementField2021}. 
There are at least two cases where entanglement as a resource is helpful. First, it allows for estimating a global property (e.g. the average value, gradients and other analytical functions~\cite{qianHeisenbergscalingMeasurementProtocol2019}) with Heisenberg-scaling precision. However, Heisenberg-scaling collapses to a constant improvement in the presence of noise. 
Second, entanglement again helps to remove correlated noise in quantum sensor networks by constructing (approximate) decoherence-free subspaces and restores the Heisenberg-scaling~\cite{sekatskiOptimalDistributedSensing2020, wolkNoisyDistributedSensing2020, hamannApproximateDecoherenceFree2022}.

On the other hand, entanglement is not helpful, when all local parameters should be estimated~\cite{knottLocalGlobalStrategies2016}. Consequently, it is not useful for estimating many global properties~\cite{rubioQuantumSensingNetworks2020, bringewattProtocolsEstimatingMultiple2021}, which allows for recovering the local parameters.

Here we consider the scenario, where a distributed network of (entangled) sensors is used to estimate particular features of space-dependent fields.
We generalize the decoherence free subspaces to the multi-parameter distributed sensing setting. We focus on the asymptotic ``Fisher" regime of many repetitions, where the performance of a sensing protocol is well characterized by the Fisher information matrix (FIM), that we now introduce.

\subsection{The Fisher Information Matrix}

Let a single run of the experiment be described by a random variable $X$, distributed accordingly to the parametric model $P(x|\bm \alpha)$ where $\bm \alpha =(\alpha_1,\dots \alpha_s)$  is the column vector collecting the parameters of interest. The FIM matrix $F(\bm \alpha)$ associated by to such protocol is given by 
\begin{equation}\begin{split}
    F_{\mu \nu}(\bm \alpha) &= \mathds{E}\left[\partial_\mu \ln p(x|\bm \alpha) \partial_\nu \ln p(x|\bm \alpha)\right] \\
    &=\sum_x \frac{\partial_\mu p(x|\bm \alpha) \partial_\nu p(x|\bm \alpha)}{ p(x|\bm \alpha)},
\end{split}
\end{equation}
with $\partial_\mu p(x|\bm \alpha) \equiv \frac{\partial p(x|\bm \alpha)}{\partial \alpha_\mu}$. 
The FIM is positive semi-definite and defines a Riemannian ``statistical" metric over the set of parameters. It has found various applications. In parameter estimation its significance is well emphasized by the Cramer-Rao bound
\begin{equation} \label{eq: CRB}
    \text{Cov}(\hat {\bm \alpha}) \geq F^{-1}(\bm \alpha),
\end{equation}
relating the FIM to the covariance matrix $\text{Cov}(\hat {\bm \alpha})$ of an estimator $\hat {\bm \alpha}(X)$.
Recall that for unbiased estimators, i.e. such that $\mathds{E}[\hat {\bm \alpha}(X)] = \bm \alpha$, the covariance matrix is given by 
\begin{equation}\label{def: covariance}
\text{Cov}_{\mu \nu}(\hat{\bm \alpha})= \mathds{E}\Big[(\hat \alpha_\mu(X) - \alpha_\mu)(\hat \alpha_\nu(X) - \alpha_\nu)\Big],
\end{equation}
and is a natural quantifier of estimation errors. Furthermore, with a proper choice of estimators the  Cramer-Rao bound is asymptotically saturable ~\cite{ragyCompatibilityMultiparameterQuantum2016, grossOneManyEstimating2021}. 

For a reader familiar with scalar parameter estimation theory it may be insightfull to consider small parameter variations $\dd \bm \alpha = ( v_1 \dd \theta, \dots, v_s \dd \theta)^\intercal$ along a fixed direction $\bm v = (v_1,\dots v_s)^\intercal$. Here, the quantity $F_{(\bm v)}(\theta) \equiv \bm v^\intercal F(\bm \alpha) \bm v$ becomes the scalar Fisher information for the parameter $\theta$, along a curve $\bm \alpha(\theta)$ passing through the point $\bm \alpha$. In particular, it can be understood as the susceptibility of the Kullback–Leibler divergence $F_{(\bm v)} = \lim_{\dd \theta \to 0}\frac{1}{\ln 4}\frac{D\big(p(x|\bm \alpha)||p(x|\bm \alpha +\dd \bm \alpha)\big)}{\dd \theta^2}$~\cite{cover1991information}. It is important to emphasise that the quantity $F_{\bm v}$ can not be used to tightly bound the error of estimating $\alpha_{(\bm v)} \equiv \bm v^\intercal \bm \alpha$ in the \textit{multi-parameter setting}, because one does not have the knowledge of the curve $\bm \alpha(\theta)$ to start with. See~\cite{grossOneManyEstimating2021} for a detailed discussion of the relation between scalar and multi-parameter estimation settings.

In a quantum experiment the parameters are first encoded into a quantum state $\rho_{\bm \alpha}$, and the classical setting is only recovered upon fixing the final measurement performed on the system. Nevertheless, it is possible to define the quantum Fisher information matrix (QFIM) $\mathcal{F}(\rho_{\bm \alpha})$ that  only depends on the parametric state model
\begin{equation}\begin{split}    
    \mathcal{F}_{\mu \nu}(\rho_{\bm \alpha}) = \frac{1}{2} \tr \rho_{\bm \alpha} (L_\mu L_\nu+L_\nu L_\mu),
\end{split}
\end{equation}
where the symmetric logarithmic derivative (SLD) operators $L_\mu$ and $L_\nu$ are solutions of 
\begin{equation}
    \partial_\mu \rho_{\bm \alpha} = \frac{1}{2}\left(L_\mu \rho_{\bm \alpha} +  \rho_{\bm \alpha}  L_\mu \right)
\end{equation}
 with $\partial_\mu \rho_{\bm \alpha} \equiv \frac{\partial \rho_{\bm \alpha}}{\partial \alpha_\mu}$. The QFIM is also positive semi-definite and defines a metric in the space of quantum states parameterized by $\rho_{\bm \alpha}$~\cite{LiuYuanLuWang2019}. Moreover, it satisfies
\begin{equation}\label{eq: QFIM>FIM}
\mathcal{F}(\rho_{\bm \alpha}) \geq F(\bm \alpha),    
\end{equation}
where $F(\bm \alpha)$ is the FIM arising from any measurement choice. In contrast to the scalar case, here a measurement that saturates the above equation does not always exist. However, for any fixed direction $\dd \bm \alpha = ( v_1 \dd \theta, \dots, v_s \dd \theta)$ the bound 
\begin{equation}
\mathcal{F}_{(\bm v)}(\rho_{\bm \alpha}) \equiv \bm v^\intercal \mathcal{F}(\rho_{\bm \alpha})  \bm v \geq F_{(\bm v)} (\bm \alpha)
\end{equation}
can be attained by the right measurement choice. The impossibility of saturating Eq.~\eqref{eq: QFIM>FIM} arises when the measurements maximizing $F_{(\bm v)} (\bm \alpha)$ for different directions are not compatible. For completeness, in the appendix~\ref{app: QFIM} we present a derivation of the equation~\eqref{eq: QFIM>FIM} and briefly discuss its saturability. Combining the inequalities in the Eq.\eqref{eq: CRB} and Eq.~\eqref{eq: QFIM>FIM} gives the so called quantum Cramer-Rao bound
\begin{equation}\label{eq: QCRB}
     \text{Cov}(\hat {\bm \alpha}) \geq \cF^{-1}(\rho_{\bm \alpha}).
\end{equation}

\subsection{Comparing sensing protocols}

In the following we will usually deal with pure states $\rho_{\bm \alpha} =\ketbra{\psi_{\bm \alpha}}$, and with a fixed unitary parameter encoding $\ket{\psi_{\bm \alpha}} = e^{-\ii\, \bm \alpha^\intercal \bm G} \ket{\psi}$, where $\bm G$ contains the generators of the parameters. Hence, we will simply denote the QFIM as $\mathcal{F}_\psi$, where $\ket{\psi}$ is the intial state in which the probe (sensing systems) are prepared.

The performance of a multi-parameter estimation strategy is  quantified by the covariance matrix $\text{Cov}(\hat{\bm \alpha})$ in the Eq.~\eqref{def: covariance}. The quantum Cramer-Rao bound~\eqref{eq: QCRB} lower-bounds the covariance matrix with the QFIM, which can be readily computed from the initial state $\ket{\psi}$. Hence, it is natural and convenient to use the QFIM as a partial ordering of the estimation strategies. Accordingly to this ordering, a strategy with probe state $\left\vert \psi \right\rangle$ is not worse than $\left\vert \phi \right\rangle$, iff the corresponding QFIM fulfills $\mathcal{F}_\psi \geq \mathcal{F}_\phi \Leftrightarrow \mathcal{F}_\psi - \mathcal{F}_\phi \geq 0$.  
This is a partial order i.e. not all strategies are comparable and hence there is generally no global optimal strategy. However, a strategy is extremal if there is no strategy with greater QFIM. Note that generally the ordering of QFIM does not imply an ordering of the corresponding FIM. Nevertheless in our context we will later show that the strategies identified as extremal have saturable QFIM (by constructing a measurement such that $F(\bm \alpha)= \cF_\psi$), guaranteeing that they are also improved with respect to a partial ordering induced by the FIM.

A different approach to compare strategies is by a figure of merit. The (weighted) trace 
\begin{equation}\label{equ:merit}
    M=\tr W  \mathrm{Cov}( \hat{\bm \alpha}) =\sum_i W_{ij} \mathrm{Cov}_{ij}(\hat{\bm \alpha})
\end{equation} of the covariance matrix with respect to a positive semidefinite matrix $W$ is a broadly used figure of merit for multi-parameter estimation. In particular, one ofter often consider the case where $W $ is diagonal leading to $ M=\sum_i w_i \mathrm{Var}( \alpha_i)$. A figure of merit induces a total order and makes all strategies comparable. Hence now there exits an optimal strategy for this figure of merit. Multiparamter estimation without noise under this figure of merit is well studied~\cite{proctorMultiparameterEstimationNetworked2018, bringewattProtocolsEstimatingMultiple2021, ehrenbergMinimumEntanglementProtocols2021, rubioQuantumSensingNetworks2020}. Notice that finding an optimal strategy for a given figure of merit and finding extremal strategies with respect to FIM are related but independent problems.

\subsection{Setup}
\begin{figure}
    \includegraphics[width=\linewidth]{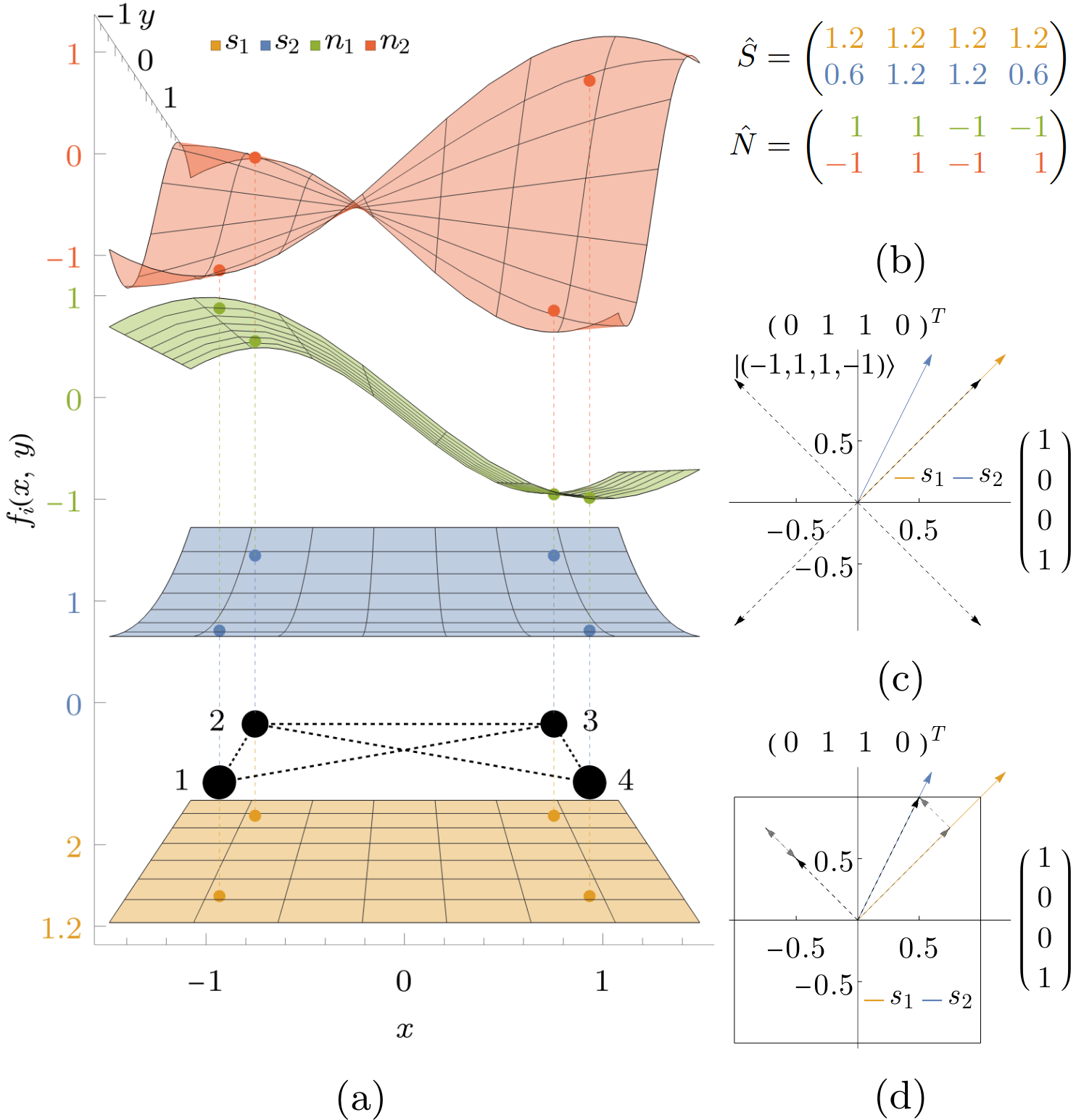}
    \caption{ (a) A quantum sensor network consisting of for sensors (black spheres) interacting with two signal $s_1=1.2,s_2=0.6 (1+(\frac{y + 1}{2})^2)$ and two noise fields $n_1=-\sin(\frac{\pi}{2}x),n_2=-\sin(xy)/\sin(1)$. 
    (b) In the signal matrix $\hat{S}$ and noise matrix $\hat{N}$ each row corresponds to one field and each column to one sensor.
    (c) A cut along the (0,1,1,0) and (1,0,0,1) axis through the 4-dimensional sample space. Coloured vectors represent the signals. Black Dashed vectors represent 4 computational basis states $\left\vert -1,1,1,-1\right\rangle, \left\vert 1,-1,-1,1\right\rangle, \left\vert 1,1,1,1,1\right\rangle $ and $\left\vert -1,-1,-1,-1\right\rangle$. The energy of the state with respect to a certain field is given by the scalar product of their vector representations in the sample space. As the shown subspace is orthogonal to the noises, all represented states have the energy (zero) with respect to the noise fields. Therefore the cut is the representation of a decoherence-free subspace.
    (d) State vectors with non-integer entries, do not have a corresponding quantum state. Anyhow these states can effectively be implemented, by first decomposing them into computational basis states, and then applying $X$ pules at the appropriate sensor and time. In the example pulses are applied after three-quarters of the interaction time to the first and the last sensor to align with the blue signal, or at all qubits to reduce the effective energy by $1/2$. Therefore all state vectors within the black square can be efficiently simulated.  }
    \label{fig:overview}
\end{figure}

\begin{figure}
    \includegraphics[width=\linewidth]{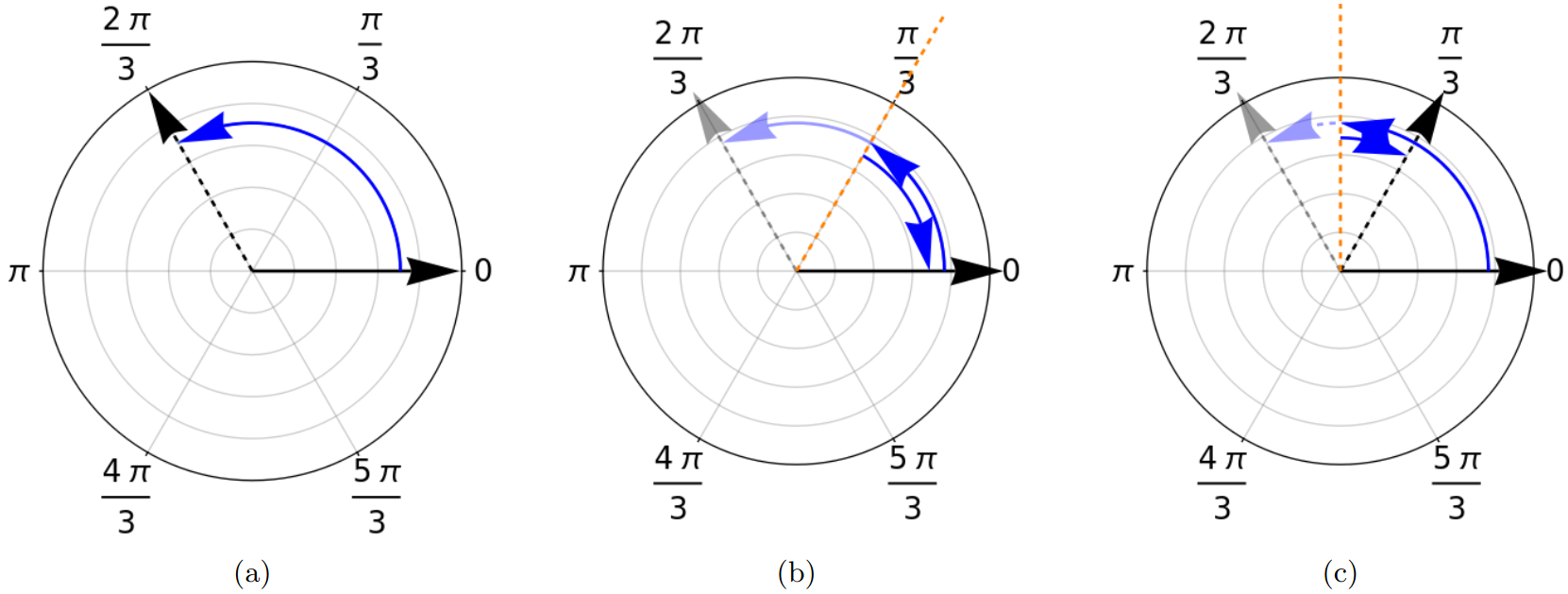}
    \caption{Flipping represented in the $X-Y$ plane of the Bloch sphere. The Qubit is initialized in zero. (a) Without flipping the the qubit will accumulate a certain phase. (b) After the half evolution a flip ($X$-gate) is applied. This effectively inverts the direction of the evolution, such that after the full evolution effectively no phase was accumulated. (c) A flip after three quaters of the evolution will effectively give half the phase.}
    \label{fig:flipping}
\end{figure}

We consider commuting fields with different spatial dependence  $f_i(\vec x)$, where the parameters of interest are not local quantities, but field amplitudes $\alpha_i$.  For example, the mass of an object is a multiplicative factor to the gravitational field it emits, idem for the magnetization of a magnet.
A quantum sensor network, i.e. quantum sensors distributed in space connected by a quantum network, is used to estimate these parameters (Fig.~\ref{fig:overview}). We further assume that individual sensors are described by qubits and that the interaction between the field and the sensor is locally generated by the $\sigma_z$ operator.
The parameters are then encoded by the generators
$$G_i= \sum_{j=1}^n f_i(\vec{x}_j)\sigma_z^{j} = \sum_{j=1}^n \hat{S}_{ij}\sigma_z^{j},$$
via a unitary interaction governed by he Hamiltonian $H_\text{signal} = \sum_i \alpha_i G_i$. These fields are called signals, for fixed sensor positions all relevant spatial dependencies are contained in the matrix $\hat{S}$ with components $S_{ij}= f_i(\vec{x}_j)$ (Fig.:~\ref{fig:overview}b).

Correlated noise is modeled by fields $f^{\text{noise}}_i$ with global amplitudes (strength) $\boldsymbol{\beta}=(\beta_1,\beta_2,...)$. The spatial noise dependencies are contained in the noise matrix $\hat{N}_{ij}=f^{\text{noise}}_i(\vec{x}_j)$ and the generators $G_i^\text{noise}$ are defined analogously to the signals. It is possible to consider several models of how the noise enters in the estimation scenario. These range from fixed but unknown strength $\bm \beta$, that one does not want to estimate given that it would cost additional resources. To the worst case scenario, where time-dependent amplitudes $\bm \beta$ are subject to white noise fluctuations. In any case our strategy allows one to completely obliterate the effect of the noise by preparing and keeping the state of the sensor network inside a noise-insensitive subspace. This setup was introduced in~\cite{sekatskiOptimalDistributedSensing2020} and well studied for single parameter estimation~\cite{sekatskiOptimalDistributedSensing2020, wolkNoisyDistributedSensing2020, hamannApproximateDecoherenceFree2022}.

\subsection{Decoherence free subspaces}
In this paper we use the joined eigenbasis  $\sigma_z^j\left\vert \boldsymbol{k} \right\rangle=k_j\left\vert \boldsymbol{k} \right\rangle$ of local $\sigma_z$ as basis.
This basis has the nice property that the energy with respect to generator $G_i$ is given by 
\begin{equation}\label{equ:energy}
    G_i^\mathrm{eff}\left\vert\bm k\right\rangle  =\sum_i \hat S_{ij} k_j  \vert \bm k \rangle = \left(\boldsymbol{e}_i^\intercal\hat{S} \boldsymbol{k} \right)\vert \bm k \rangle
\end{equation}
the scalar product of the $i$-th row of the matrix $\hat{S}$ and the vector $\bm k$ labeling the state.
The optimal state to measure a single parameter or a linear combinations $\alpha_{(\bm v)}= \bm v ^\intercal \bm \alpha$ is a GHZ state given by
\begin{equation}\label{eq: GHZ def}
    \left\vert \mathrm{GHZ}_{\boldsymbol{k}}\right\rangle = \frac{\left\vert\boldsymbol{k}\right\rangle+\left\vert-\boldsymbol{k}\right\rangle}{\sqrt{2}}.
\end{equation}
with single parameter QFI of $4T^2 (\bm v ^\intercal \hat{S} \bm k)^2$.
For GHZ state the optimal measurement saturating the QFI is local, cf. \ref{app: GHZ local meas}.
Additionally, as illustrated in Fig.~\ref{fig:flipping} non integer entries of $k_j$ can be effectively implemented~\cite{sekatskiOptimalDistributedSensing2020}. This is done by inverting parts of the evolution  by flipping the local qubits at time $t_j=\frac{1+k_j}{2}$. Notice that only a single flip per qubit is required. We also remark that other methods are available to obtain a reduced coupling of individual spins to the field, e.g. by using different energy levels.

The state is protected from noise if $\vert \boldsymbol{k}\rangle$ and $\left\vert -\boldsymbol{k}\right\rangle$ have the same eigenvalues for all the noise generators $G_i^{\mathrm{noise}}$. In particular, this is the case if the vector $\boldsymbol{k}$ is the in the kernel of the noise matrix $\hat{N}\boldsymbol{k} = 0$, leading to the definition of the decoherence-free subspace introduced in~\cite{sekatskiOptimalDistributedSensing2020}
\begin{equation}\label{eq: DFS}
    \mathrm{DFS} = \left\{\boldsymbol{k} \big| \hat{N}\boldsymbol{k} = 0,\, \Vert \boldsymbol{k}\Vert_\infty\leq 1\right\}.
\end{equation}
The DFS is a real $(n-\mathrm{rank} \hat{N})$ dimensional polytope (Fig.~\ref{fig:overview}c,d and Fig.~\ref{fig:example-non-integer}). Notice that by construction if $\boldsymbol{k}$ is in the DFS then $-\boldsymbol{k}$ is too.
Thus, all GHZ states in Eq.~\eqref{eq: GHZ def} with $\boldsymbol{k}\in \mathrm{DFS}$ are protected from the noise and remain pure during the evolution.

More generally, one may also consider the affine decoherence free subspaces $\text{DFS}_{\bm \kappa} =\{\bm k | \hat{N}\bm k = \bm \kappa,\, \|\bm k\|_\infty\leq 1\}$. All states within an $\text{DFS}_{\bm \kappa}$ are also protected from noise, however, in contrast to the GHZ states, they are in general very hard to engineer and require entangled measurements for readout. For these reasons in the rest of the paper we only consider noise-insensitive strategies involving states from the DFS defined in Eq.~\eqref{eq: DFS}. Nevertheless, in the appendix~\ref{appendix:affineDFS} we present an example for which we prove that it can be advantageous to prepare states in a well chosen affine DFS; this is is sharp contrast with the single-parameter regime where a particular GHZ state with $\hat{N}\bm k=0$ is always optimal~\cite{sekatskiOptimalDistributedSensing2020}. In the appendix~\ref{appendix:affineDFS:upperbound} we upper-bound the advantage of the affine DFS to a factor four at most.

\section{Methods}\label{sec:methods}
\subsection{Sequential GHZ strategy}
A sequential GHZ strategy~\cite{altenburgMultiparameterEstimationGlobal2019} consists of preparing different states  $\ket{\mathrm{GHZ}_{\bm k_i}}$ in different rounds of the experiment. Let us denote the frequencies with which different states are prepared by $r_i$, which sum up to 1. This strategy has the same QFIM as the one where one repeatedly prepare the state  
\begin{equation}
   \varrho = \bigoplus_i r_i \ketbra{\mathrm{GHZ}_{\bm k_i}}
\end{equation}
where the block-diagonal structure is carried by a classical label denoting which GHZ state has been prepared. In the limit of a large number of repetitions the two strategies are equivalent, nevertheless it is sometimes more convenient to argue in terms of a single state $\varrho$.

The sequential strategy is not necessarily optimal. In particular, it is generally better to measure linear combinations of the parameters~\cite{bringewattProtocolsEstimatingMultiple2021}.

\subsubsection{QFIM}
At the core of our analysis is the QFIM.
For sequential GHZ strategies within the DFS it takes a particularly simple form, and can be directly expressed in terms of the vectors $\bm k_i$ and the signal matrix $\hat S$ as
\begin{equation}\label{equ:K1}
    \mathcal{F} = 4 T^2 \, \Hat{S} \hat{K}\hat{S}^\intercal
\text{, where }
    \hat{K} = \sum_i r_i \boldsymbol{k}_i \boldsymbol{k}_i ^\intercal
\end{equation}
is the weighted sum over the projectors onto $\boldsymbol{k}_i$.
This can be seen by looking at the expectation value of the SLD. For a sequential strategy the state and the SLD are block-diagonal and hence the QFIM is given by the average $\cF = \sum_i r_i \mathcal{F}_i$ over all states $\ket{\mathrm{GHZ}_{\bm k_i}}$. In turn, by virtue of Eq.~\eqref{equ:energy} each of these states is insensitive to signals orthogonal to $\bm k$, and has the QFIM given by $\mathcal{F}_{\vert \mathrm{GHZ}_{\bm k}\rangle}= 4 T^2 \hat{S}\bm k \bm k^\intercal\hat{S}^\intercal$ as shown in appendix~\ref{appendix:QFIM}, leading to  Eq.~\eqref{equ:K}.

It is worth emphasizing that the QFIM in Eq.~\label{equ:K} is saturable with local measurements. This is because in each run one only estimates a single parameter, and for the states $\vert \mathrm{GHZ}_{\bm k}\rangle$ it is known~\cite{sekatskiOptimalDistributedSensing2020} that the QFI can be saturated with a measurement strategy which simply combines the results of fixed local measurements, see~\ref{appendix:QFIM}.

\subsection{General strategies}
To show the optimality of the sequential GHZ strategies we compare them to general strategies, where an arbitrary superposition of states within the DFS is prepared 
\begin{equation}\label{eq: general psi}
    \vert \psi \rangle = \sum_{\bm k \in \kappa \subset \mathrm{DFS}} c_{\bm k }\ket{\bm k},
\end{equation}
where all the states $\ket{\bm k}$ are orthogonal.

 This strategies might not be directly implementable as $\kappa$ might contain more than $2^n$ elements. Therefore $\bm k,\bm k'\in \kappa$ cannot all be prepared orthogonal. Additionally $\bm k$ and $\bm k'$ might require incompatible flipping sequences.
Anyhow it turns out that all these strategies can actually be effectively implemented with an auxiliary system. The states get distinguishable by preparing for each $\bm k$ orthogonal labels in the auxiliary system. These labels allow then apply incompatible flipping sequences, by controlling them on the auxiliary system being in the corresponding label state. 
This strategies include all possible initial states. The QFIM of a strategy represented by the state in Eq.~\eqref{eq: general psi} is given by
\begin{equation}\label{equ:K2} \begin{split}
    \mathcal{F}_\psi &= 4 T^2 \Hat{S} \hat{K}_\psi \hat{S}^\intercal
\text{, with } \\
    \hat{K}_\psi &=\! \sum_{\bm k \in \kappa }\vert c_{\bm k}\vert ^2  \boldsymbol{k}\boldsymbol{k} ^\intercal - \bar{\bm k}\bar{\bm k}^\intercal  =\! \sum_{\bm k \in \kappa} \vert c_{\bm k}\vert ^2 (\bm k -\bar {\bm k})  (\bm k -\bar {\bm k})^\intercal
\end{split}
\end{equation}
and $\bar{\bm k} = \sum_{\bm k \in \kappa } \vert c_{\bm k}\vert^2 \bm k$. 

\section{Results}\label{sec:results}

\subsection{The pre-order of strategies is independent of the signals $\hat S$}

Our goal now is to use the partial order provided by the QFIM to compare\footnote{In the case of strategies this only induces a pre-order, as different strategies may lead to the same QFIM.} general noise-insensitive strategies labeled by a state $\varrho$. We say that a strategy $\rho$ improves over another one $\sigma$ if their QFIM satisfy $\cF_\rho \geq \cF_\sigma$ (in slight abuse of language this includes the case where the QFIMs are equal). We call a strategy $\rho$ extremal if there exist no strategy $\sigma$ with $\cF_\sigma > \cF_\rho$.

The first general observation is that the sensing strategies can be directly compared on the level of  the $\hat K$ matrices in Eqs.~(\ref{equ:K1},\ref{equ:K2}). Indeed for any two strategies with $\hat K_1 \geq \hat K_2$, we get  $\hat{S} (\hat{K}_1-\hat {K}_2)\hat{S}^\intercal = \left(\hat{S} \sqrt{ \hat{K}_1-\hat {K}_2}\right)\left(\hat S \sqrt{\hat{K}_1-\hat {K}_2}\right)^\intercal \geq 0$ and thus  $\cF_1\geq \cF_2$ for all signal matrices $\hat S$.  Similar arguments can be made about extremal strategies, however for degenerate signal matrices $\hat S$ some strategies leading to extremal $\hat K$ might not be extremal for the QFIM.

\subsection{Extremal strategies are GHZ  sequential}

\label{sec: construction}

Consider any strategy represented by the  state 
\begin{equation}
\vert \psi \rangle = \sum_{\bm k \in \kappa \subset \mathrm{DFS}} c'_{\bm k}\vert \boldsymbol{k} \rangle
\end{equation}
leading the the QFIM $\cF_\psi$ of Eq.~\eqref{equ:K}.We now show that there is a sequential GHZ strategy $\varrho$ which outperforms $\ket{\psi}$, i.e. which leads to a larger QFIM $\cF_\varrho \geq \cF_\psi$. To do so we construct $\varrho$ following three steps that are easy to follow.
\begin{itemize}
    \item[1.]\textbf{Symmetrization:}
    Define a new state  $\vert \phi \rangle = \sum_{\bm k \in \kappa } \frac{c_{\bm k }}{\sqrt{2}} \ket{\bm k}$ with symmetrized coefficients $c_{\boldsymbol{k}}  =\sqrt{|c'_{\boldsymbol{k}}|^2+|c'_{\boldsymbol{-k}}|^2}$, which can be expressed as a superposition of GHZ states
    \begin{equation}
    \ket{\phi} = \sum_{\bm k \in \kappa \subset \mathrm{DFS}} c_{\bm k} \ket{\textrm{GHZ}_{\bm k}}.
        \end{equation}
We have $\hat{K}_\phi-\hat{K}_\psi = - \bar{\bm k}\bar{\bm k}^\intercal \leq 0$, which implies that the new state has an improved QFIM $\cF_\phi \geq \cF_\psi$.
   
    \item[2.]\textbf{Sequentiallization:}
    Instead of preparing the state $\ket{\phi}$,
    one can run a sequential strategy given by the state
    \begin{equation}\label{eq: sequentalization}
    \varrho = \bigoplus_{\bm k \in \kappa \subset \mathrm{DFS}} |c_{\bm k}|^2 \ketbra{\textrm{GHZ}_{\bm k}},
    \end{equation}
    and leading the the same QFIM $\cF_\phi = \cF_\varrho$.
\end{itemize}
 A more detailed discussion for the symmetrization and the sequentiallization steps can be found in  appendices~\ref{appendix:symmetrization} and \ref{appendix:sequential} respectively. At this point we have already constructed the sequential strategy that improves over the original one. Nevertheless, the following step allows to further improve the sequential strategy.

\begin{itemize}
    \item[3.]\textbf{Vertices:}
    For the sequential strategy $\varrho$ in Eq.~\eqref{eq: sequentalization} one can decompose each vector $\boldsymbol{k}=\sum_i p^{\boldsymbol{k}}_i \boldsymbol{v}^{\boldsymbol{k}}_i$ into a convex combination of vectors $\bm v_i$ pointing to the vertices of the DFS. This allows one to define a new sequential strategy 
    \begin{equation}\label{eq: seq vortex}
    \rho =\bigoplus_{\bm k,i}  r_{\bm k,i} \ketbra{\textrm{GHZ}_{\bm v_i^{\bm k}}}. 
    \end{equation}
    with $r_{\bm k,i} \equiv |c_{\bm k}|^2 p_i^{\bm k}$, where only GHZ states corresponding to the vertices of the DFS are prepared sequentially. In the appendix~\ref{appendix:vertices} we show that $\sum_i p_i^{\bm k}\, \bm v_i^{\bm k} \bm v_i^{\bm k \intercal}\geq \boldsymbol{k}\boldsymbol{k} ^\intercal$, which implies that step 3 is also beneficial for the QFIM $\cF_\rho \geq \cF_\varrho$.
    \end{itemize}

Finally, in the appendix~\ref{appendix:vertex-GHZ-extremal} we show that any such \textit{vertex-sequential} GHZ strategy $\rho$, i.e. a strategy which only involves GHZ states pointing to the vertices of the DFS, is extremal  -- there is no noise insensitive strategy $\varrho'$ with $\cF_{\varrho'}>\cF_\rho$. This is proven by assuming that such an improved strategy $\varrho'$ exists, and then showing that this leads to a contradiction with the assumption that all the GHZ states in $\rho$ point into vertices of the DFS. In summary we how established the following result.\\

\textbf{Result.} \textit{In the considered setting with commuting signal $G_i$ and noise $G_i^{\mathrm{noise}}$ generators, all extremal (with respect to the the QFIM) noise-insensitive strategies in the DFS of Eq.~\eqref{eq: DFS}, can be realised as vertex-sequential GHZ strategies $\rho$ in Eq.~\eqref{eq: seq vortex}. }\\

This allows on the one hand to use fixed states for all sensing problems, and on the other hand to profit from the advantages of the single-parameter strategy, namely to require only local measurements for readout.
In particular, the results shows that in the noiseless scenario where the signals are acquired in successive rounds with evolution time $T$, the sequential GHZ strategy is also optimal.

\subsection{Optimal sequential GHZ strategy}

In the previous section we have seen, that from the QFIM perspective is sufficient to consider sequential GHZ states pointing into the vertices of the DFS $\boldsymbol{v}_i$. Now let us focus on a particular figure of merit $M$ introduced in Eq.~\eqref{equ:merit}. The Cramer-Rao bound implies the bound $M \geq \tr W F^{-1}(\bm \alpha)$, which is saturable in the assymptotic sampling limit. Furthermore, for any strategy $\ket{\psi}$ we know that $F_\psi^{-1}(\bm \alpha)\geq \cF^{-1}_{\psi}\geq \cF^{-1}_{\rho}$ where $\rho$ is some vertex-sequential GHZ strategy. But we have seen that for such strategies the QFIM is saturable with a local measurement, i.e $F_\rho(\alpha) = \cF_{\rho}$. Hence we find that the bound
\begin{equation}
    M \geq \min_\rho \tr W \cF_\rho^{-1}
\end{equation}
is attainable in the limit of many repetitions. Here the minimization is taken over all vertex-sequential GHZ strategies, the optimal value of the figure of merit is thus given by
\begin{equation}\label{equ:opt}
    M_{\mathrm{opt}} = \min_{r_1,r_2,...}  M\left(\bigoplus_{i} r_i\ketbra{\mathrm{GHZ}_{\boldsymbol{v}_i}}\right)
\end{equation}
where one minimizes over the rates of sequentially probing all vertices of the DFS. In general, this expression is difficult to simplify further, because it involves the inverse of the average QFIM $\cF_\rho = \sum r_i\,  \mathcal{F}_{\vert \mathrm{GHZ}_{{\bm v}_i}\rangle}$.

If the $\bm v_i$ are orthogonal and $\hat{S}$ is invertible one obtains an expression of $\cF_\rho^{-1}$ that is linear in $r_i^{-1}$. Then, as shown in Appendix~\ref{appendix:orthogonal_labels}, the optimal rates are found to be
\begin{equation}
    r_i = \frac{|\bm v_i|^2}{\sqrt{w_i}}\left(\sum_j\frac{|\bm v_j|^2}{\sqrt{w_j}}\right)^{-1}
\end{equation}
 with  $w_i =   \left(\hat S^{-1} \bm v_i\right)^\intercal W \hat S^{-1}\bm v_i$.

\section{Example}\label{sec:example}
To illustrate our results, we will consider the example with a four sensor network shown in Fig.~\ref{fig:overview}. The sensor network is planar and forms a square. Denoting the spatial coordinates with $\vec x = (x,y)$, the sensors are located at positions $\vec x_i \in\{(-1,-1), (-1,1), (1,1), (-1,1)\}$. The two signals are chosen to have field shapes that are common in physics -- $\alpha_1$ is the strength of a constant field $f_1(\vec x)=1.2$, while $\alpha_2$ is the strength of a field constant along the x-axis and quadratic along y-axis $f_2(\vec x)=0.6(1+\frac{1}{4}(y+1)^2)$. They gives the following signal  matrix 
\begin{equation}
\hat{S}=\left(\begin{matrix}
    1.2& 1.2& 1.2& 1.2\\ 0.6& 1.2 & 0.6 & 1.2
\end{matrix}\right).
\end{equation} 
The signals have to be estimated in the presence of two periodic noise fields. The first one is periodic along the x and constant along the y axis $f^{\text{noise}}_1 =-\sin(\frac{\pi}{2}x)$, the second is a standing wave along both axis $f^{\text{noise}}_2 =-\sin(x y)/\sin(1)$. The noise matrix is then given by  \begin{equation}
\hat{N}=\left(\begin{matrix}
    1& 1& -1& -1\\ -1& 1 & -1 & 1
\end{matrix}\right).
\end{equation}
The decoherence free subspace is the convex polytope with vertices 
\begin{equation}\begin{split}
\bm v_1 = -\bm v_3 &= \left(1,1,1,1\right) \\
\bm v_2 = -\bm v_4 &=\left(-1,1,1,-1\right).
\end{split}
\end{equation}
The example is artificial in the sense that the considered signals and noise fields are not motivated from a particular set-up or situation, but serve to illustrate our approach. Notice that any signal and noises with different spatial dependencies can be treated in the same fashion.
We use this example to illustrate the optimization of a given strategy. Then we will compute the optimal strategy for the figure of merit given by the trace of the covariance matrix $M= \tr(\mathrm{Cov} (\hat {\bm \alpha})) =\mathrm{Var}(\hat \alpha_1)+ \mathrm{Var}(\hat \alpha_2)$.

\subsection{Optimizing a given strategy}
To illustrate the steps of the optimization procedure we consider the following initial state
\begin{align*}
    \left\vert\psi\right\rangle 
    = &\left(\nu\left\vert 1,1\right\rangle_{2,3}+\mu\left\vert -1,-1\right\rangle_{2,3} \right) \\
    &\otimes\frac{1}{\sqrt{2}}\left(  \left\vert \gamma,\gamma\right\rangle_{1,4}+ \left\vert -\gamma,-\gamma\right\rangle_{1,4}\right),
\end{align*}
with $\vert\mu\vert^2 + \vert \nu \vert^2 = 1$ and $0\leq\gamma\leq 1$.
The qubits 2 and 3 are prepared in a state, which is maximally entangled for $\mu=\nu=\frac{1}{\sqrt{2}}$ and is product for $\mu=1$ or $\nu=1$, and evolve freely. The sensors 1 and 4 are initially prepared in the Bell states $\frac{1}{\sqrt{2}}(\ket{1,1}+\ket{-1,-1})$ and the $X$-gate is applied to both qubits at time $t=\frac{1+\gamma}{2}T$. We choose this state as it is sufficiently complicated to illustrate all optimization steps, while still being relatively simple. In terms, of the state $\ket{\bm k}$ labeled by the $\bm k$ vectors the state is given by
\begin{equation}
    \ket{\psi} = \frac{\nu}{\sqrt 2} \ket{\bm k_1} +  \frac{\mu}{\sqrt 2} \ket{\bm k_2} + \frac{\mu}{\sqrt 2} \ket{\bm k_3} +  \frac{\nu}{\sqrt 2} \ket{\bm k_4}
\end{equation}
where $\bm k_1= - \bm k_3 =(\gamma,1,1,\gamma) $ and $\bm k_2 = -\bm k_4 = (\gamma,-1,-1,\gamma)$.

Notice that $\vert\psi\rangle$ is within the DFS and hence protected from the noise. Using (\ref{equ:K2}) we find its $\hat{K}$ matrix
\begin{align*}
    \hat{K} &= \begin{pmatrix}
        \gamma^2 & 0 & 0 & \gamma^2\\
        0 & 4\vert\mu\vert^2\vert\nu\vert^2 & 4\vert\mu\vert^2\vert\nu\vert^2 & 0\\
        0 & 4\vert\mu\vert^2\vert\nu\vert^2 & 4\vert\mu\vert^2\vert\nu\vert^2 & 0\\
        \gamma^2 & 0 & 0 & \gamma^2\\
    \end{pmatrix}
\intertext{and the QFIM }
    \mathcal{F} &= \frac{24^2T^2}{5^2} \begin{pmatrix}
        \gamma^2 + 4\vert\mu\vert^2\vert\nu\vert^2& \frac{\gamma^2}{2} + 4\vert\mu\vert^2\vert\nu\vert^2\\
        \frac{\gamma^2}{2}+4\vert\mu\vert^2\vert\nu\vert^2 & \frac{\gamma^2}{4}+4\vert\mu\vert^2\vert\nu\vert^2
    \end{pmatrix}.
\end{align*}

Following the \textbf{symmetrization}(1) step of the protocol discussed in \ref{sec: construction}, we obtain an improved state 
\begin{equation}
    \ket{\phi}= \frac{1}{\sqrt 2} \left(\ket{\text{GHZ}_{\bm k_1}}+\ket{\text{GHZ}_{\bm k_2}}\right)
\end{equation}
with amplitudes $\sqrt{\frac{\vert\mu\vert^2+\vert\nu\vert^2}{2}}=\frac{1}{\sqrt{2}}$. 

After the \textbf{sequentiallization}(2) step, we are left with the sequential GHZ strategy, where the two states $\ket{\text{GHZ}_{\bm k_1}}$ and $\ket{\text{GHZ}_{\bm k_2}}$ are prepared with equal probability. 

For the \textbf{vertices}(3) step we need to decompose the vectors $\bm k_1$ and $\bm k_2$ into vertices of the DFS. For our example there is a unique\footnote{The decomposition is unique up to exchanging $\bm v_1$ with $-\bm v_3$ and ${\bm v}_2$ with $-\bm v_4$, which does not change the final strategy.} decomposition given by
\begin{align*}
   \bm k_1 &= \frac{1+\gamma}{2}\bm v_1 + \frac{1-\gamma}{2} \bm v_2 \\
    \bm k_2 &= \frac{1-\gamma}{2}\bm v_3 + \frac{1+\gamma}{2} \bm v_4.
\end{align*}
Therefore we can use Lemma~\ref{lemma:sequential-vertex-GHZ} to get an improvement by using the sequential vertex protocol using $\ket{\text{GHZ}_{\bm v_1}}$ and $\ket{\text{GHZ}_{\bm v_2}}$ with equal probability $\frac{1}{2} = \frac{1}{2}(\frac{1+\gamma}{2}) +\frac{1}{2}(\frac{1-\gamma}{2})$. 
The performance of this strategy is characterized by 
\begin{align*}
    \hat{K} = \begin{pmatrix}
        1 & 0 & 0 & 1\\
        0 & 1 & 1 & 0\\
        0 & 1 & 1 & 0\\
        1 & 0 & 0 & 1\\
    \end{pmatrix}
\text{ and  }
    \mathcal{F} = \frac{24^2T^2}{5^2} \begin{pmatrix}
        2& \frac{3}{2} \\
        \frac{3}{2} & \frac{5}{4}
    \end{pmatrix}.
\end{align*}
This sequential vertex strategy is extremal and cannot be further improved.
Notice that the extremal strategy achieves the same performance as the initial state $\ket{\psi}$ for the choice $\mu=\nu=\frac{1}{\sqrt{2}}$ and $\gamma = 1$ and hence using Bell states in the bipartition (1,4) and (2,3) is extremal, too. This can additionally be seen by looking at the trace $\tr(\hat{K})=4=n$ which is maximal. Due to lemma~\ref{lemma:maximal-trace-extremal} is a sufficient criteria for extremal strategies (\ref{appendix:max_trace}).

\subsection{Optimal strategy}
Now we will compute the optimal strategy for the figure of merit $M= \tr(\mathrm{Cov} (\hat {\bm \alpha}))$. As we have shown this can be done by optimizing the rate with which different GHZ states pointing to vertices of the DFS are prepared. In our example there are only such states $\ket{\text{GHZ}_{\bm v_1}}$ and $\ket{\text{GHZ}_{\bm v_2}}$.
Introducing $\Delta = r_1-r_2$ the optimization in  (\ref{equ:opt}) simplifies to the following problem
\begin{equation}
    \min_\Delta \mathrm{tr}\left(\begin{pmatrix}
        1 & 0 \\
        0 & 1 
    \end{pmatrix}\left(
    \hat{S}
    \begin{pmatrix}
        1 & \Delta & \Delta & 1 \\
        \Delta & 1 & 1 & \Delta \\
        \Delta & 1 & 1 & \Delta \\
        1 & \Delta & \Delta & 1 \\
    \end{pmatrix}\hat{S}^\intercal
    \right)^{-1}\right)
\end{equation}
The solution is given by $M_{opt}=6.25$, and achieved by preparing $\ket{\text{GHZ}_{\bm v_1}}$ with rate $r_1=\frac{1}{6}$ and $\ket{\text{GHZ}_{\bm v_2}}$ with rate $r_2 = \frac{5}{6}$.

\section{Conclusion}\label{sec:conclusion}

The present study addressed the task of estimating the strength of multiple fields with different spatial patterns in the presence of strongly correlated noise. This is done by considering a distributed quantum sensor networks. Here, $n$ spin-$1/2$ sensors (qubits) at different locations are prepared in an initial entangled state and are coupled to the signals and the noise via the Pauli $\sigma_z^{j}$ operators. In addition, local control is performed during the evolution, which allows one to manipulate states with arbitrary effective spin values $\sigma_z^{j} \ket{k_j} = k_j \ket{k_j}$ with $|k_j|\leq 1$. We then build on the results of~\cite{sekatskiOptimalDistributedSensing2020, wolkNoisyDistributedSensing2020, hamannApproximateDecoherenceFree2022} and introduce the decoherence free subspace (DFS), where the states are decoupled from the noise. The DFS can be analyzed in terms of the spin vectors $\bm k = (k_1,k_2,\dots, k_n)$ labeling product state $\ket{\bm k}$, and defines a convex polytope inside the the associated vector space.

We consider multi-parameter estimations strategies that involve superpositions of states form the DFS and are thus completely protected from noise. We developed a systematic approach to compare such strategies and establish a pre-order emanating from the associated quantum Fisher Information matrix. We show that within the DFS all extremal QFIM are realised by vertex-sequential GHZ strategies -- strategies where only states of the form $\frac{1}{\sqrt 2}\left( \ket{\bm k} + \ket{-\bm k}\right)$, with the vector $\bm k$ pointing to a vertex of the DFS, are prepared in different runs of the experiment. We also discuss how to optimize over these strategies for a given figure of merit. Finally, a four-sensor example is provided to illustrate the application of our methods. We also demonstrate that among all noise-protected strategies the simple sequential GHZ strategies attain the optimal QFIM up to a factor of at most four.

This opens the way to use a simple generic single-parameter sensing strategy also for all multi-parameter estimation problems, and benefit from advantages such as noise protections against correlated noise, and readout by local measurements.

\section*{Acknowledgments}
A.H. and W.D. acknowledge support from the Austrian Science Fund (FWF) through the project P36009-N and P36010-N and Finanziert von der Europäischen Union.

\bibliography{ref.bib}

\appendix
\onecolumngrid
\section*{Appendix}

%%%%%%%%%%%%%%%%%%%%%%%%%%%%%%%%%%%%%%%%%%%%%%%%%%%%%%%%%%%%%%%%%%
%%%%%%%%%%%%%%%%%%%%%%%%%%%%%%%%%%%%%%%%%%%%%%%%%%%%%%%%%%%%%%%%%%

\section{Quantum Fisher information matrix}
\label{app: QFIM}

Consider a random variable $X_{\bm \alpha}$ distributed accordingly to the parametric model $p(x| \bm \alpha)$ with a vector parameter $\bm \alpha =(\alpha_1, \dots,\alpha_s)$. Around a point $\bm \alpha$  the FIM matrix is defined as
\begin{equation}
F_{\mu\nu}(X_{\bm \alpha}) = \sum_x \frac{\partial_\mu p(x|\bm \alpha) \partial_\nu p(x|\bm \alpha)}{ p(x|\bm \alpha)}
\end{equation}
with the derivatives taken with respect to different components of $\bm \alpha$ in the chosen frame. The FIM is real and symmetric $F^T=F$, furthermore it is positive semi-definite given that 
\begin{equation}
F(X_{\bm \alpha}) = \sum_x \left(\frac{\nabla p(x|\bm \alpha)}{\sqrt{p(x|\bm\alpha)}} \right) \left(\frac{\nabla p(x|\bm \alpha)}{\sqrt{p(x|\bm \alpha)}} \right)^T
\end{equation}
is a sum of terms proportional to 1-D projectors. As discussed in the main text the FIM is an important object in statistics, estimation theory and machine learning. In particular, in  estimation theory it is well known that covariance matrix $\text{Cov}(\hat {\bm \alpha})$ of an unbiased estimator $\hat {\bm \alpha}(X)$ satisfies
\begin{equation}
\text{Cov}(\hat {\bm \alpha}) \geq F^{-1}(X_{\bm \alpha})  
\end{equation}

Now consider a quantum parametric state model $\rho_{\bm \alpha}$ and define the QFIM $\cF(\rho_{\bm \alpha})$ as 
\begin{equation}
\cF_{\mu \nu}(\rho_{\bm \alpha}) = \tr\,  \rho_{\bm \alpha} \frac{1}{2}\{L_\mu,L_\nu\}   = \tr\, \partial_\nu \rho_{\bm \alpha} L_\mu = \tr\,  \partial_\mu \rho_{\bm \alpha} L_\nu
\end{equation}
with the SLD operators defined as solutions of  
\begin{equation}\label{app a: SLD}
\frac{1}{2}\{\rho_{\bm \alpha}, L_\mu\} = \partial_\mu \rho_{\bm \alpha}.
\end{equation} 
The derivative of the density matrix with respect to a parameter defines an Hermitian operator  $\partial_\mu \rho_{\bm \alpha} =(\partial_\mu \rho_{\bm \alpha})^\dag$, because $\rho_{\bm \alpha} $ and $\rho_{\bm \alpha}  + \partial_\mu \rho_{\bm \alpha} \dd \alpha_\mu$ are Hermitian. Hence, the equation $\frac{1}{2}\{\rho_{\bm \alpha}, L_\mu\} = \partial_\mu \rho_{\bm \alpha}$ implies that the SLD opeartors are also Hermitian $L_\mu = L_\mu^\dag$ (or at least can be taken to be Hermitian since $L_\mu' =\frac{1}{2} (L_\mu +L_\mu^\dag)$ solves the SLD equation if $L_\mu$ does).

\subsection{The QFIM upper bound the FIM for all measurements.}

\textbf{Observation.} \textit{The QFIM upper bounds the FIM }
\begin{equation}\label{app a: F>cF}
F(X_{\bm \alpha}) \leq  \cF (\rho_{\bm \alpha})
\end{equation}
\textit{obtained for any POVM $\{E_x\}_x$ mapping $\rho_{\bm \alpha}$ to the parametric model $p(x|\bm \alpha) = \tr \rho_{\bm \alpha} E_x$. } 

\textit{Proof:} For completeness we give a proof of this well known fact below.  We will sometimes omit the subscript $\bm \alpha$ below to lighten the equations. 
For and POVM $\{E_x\}_x$ we can write down the FIM explicitly
\begin{equation}
    F_{\mu \nu}(X_{\bm \alpha}) = \sum_x \frac{(\tr E_x \partial_\mu \rho) (\tr E_x \partial_\nu \rho)}{\tr E_x\rho}.
\end{equation} Now, let us take any real vector $\bm v = (v_1,\dots, v_s)$ and compute $F_{(\bm v)} = \bm v F(X_{\bm \alpha})\bm v^T$, we find
\begin{equation} \label{app: 1 some eq}
F_{(\bm v)} = \sum_{\nu \mu} v_\nu v_\mu \cF_{\mu \nu}(X_{\bm \alpha}) = \sum_x \frac{1}{\tr E_x \rho }( \tr E_x \sum_{\mu} v_\mu \partial_\mu \rho)^2.
\end{equation}
Using $ \partial_\mu \rho = \frac{1}{2}\{\rho, L_\mu\}$ observe that for an Hermitian operator $E_x$ one has
\begin{equation}
\begin{split}
    \tr E_x \sum_{\mu} v_\mu \partial_\mu \rho &= \frac{1}{2} \tr E_x \{\rho, \sum_\mu v_\mu L_\mu \}
= \frac{1}{2} \tr E_x \rho (\sum_\mu v_\mu L_\mu) + \frac{1}{2}  \tr E _x^\dag (\sum_\mu v_\mu L_\mu)^\dag \rho^\dag \\
& = \frac{1}{2} \tr E_x \rho (\sum_\mu v_\mu L_\mu) + \frac{1}{2}  \tr (E_x \rho (\sum_\mu v_\mu L_\mu) )^\dag = \frac{1}{2} \tr E_x \rho (\sum_\mu v_\mu L_\mu) + \frac{1}{2}  \left(\tr E_x \rho (\sum_\mu v_\mu L_\mu) \right)^* \\
& = \text{Re}\left[ \tr \rho E_x \sum_\mu v_\mu L_\mu\right]
 \end{split}
\end{equation}
where we used the fact that $L_\mu$ and $\rho$ are Hermitian in the second equality. Plugging this identity into Eq.~\eqref{app: 1 some eq} gives
\begin{equation} \label{app a: some else eq}
F_{(\bm v)} =  \sum_x \frac{1}{\tr E_x \rho } \left(\text{Re}[ \tr \rho E_x \sum_\mu v_\mu L_\mu]\right)^2 \leq \sum_x \frac{1}{\tr E_x \rho } \left| \tr \rho E_x \sum_\mu v_\mu L_\mu\right|^2 
\end{equation}
Using the Cauchy-Schwartz inequality for the Hilber-Schmidt inner product ($|\tr AB^\dag |^2 \leq \tr A A^\dag \tr B B^\dag$) we can get further bound $\left| \tr \rho E_x (\sum_\mu v_\mu L_\mu)\right|^2 = \left| \tr \sqrt{ \rho E_x}  \sqrt{E_x} (\sum_\mu v_\mu L_\mu) \sqrt{\rho} \right|^2 \leq \tr E_x \rho \, \tr \rho (\sum_\mu v_\mu L_\mu) E_x (\sum_\nu v_\nu L_\nu)$. Plugging in Eq.~\eqref{app a: some else eq} gives
\begin{equation} \label{app a: eq whatever}
\begin{split}
 F_{(\bm v)}&
\leq \sum_x \frac{1}{\tr E_x \rho } \tr E_x \rho\left(\tr \rho (\sum_\mu v_\mu L_\mu) E_x (\sum_\nu v_\nu L_\nu)\right) = \sum_x  \tr \rho (\sum_\mu v_\mu L_\mu) E_x (\sum_\nu v_\nu L_\nu) \\
& = \tr \rho (\sum_\mu v_\mu L_\mu)(\sum_\nu v_\nu L_\nu)
= \bm v \cF(\rho_{\bm \alpha}) \bm v^T
\end{split} 
\end{equation}
showing that $\bm v F(X_{\bm \alpha}) \bm v^T \leq \bm v \cF(\rho_{\bm \alpha}) \bm v^T$ for all $\bm v$ and hence $ F(X_{\bm \alpha}) \leq  \cF(\rho_{\bm \alpha}) \square$.\\

It is worth noting that the last line of Eq.~\eqref{app a: eq whatever} makes it explicit that the term  $\bm v \cF(\rho_{\bm \alpha}) \bm v^T= \tr \rho (\sum_\mu v_\mu L_\mu)^2\geq 0$ is the expected value of a non-negative operator and thus non-negative, showing that the QFIM is positive semi-definite (it is Hermitian by construction). 

The inequality $F(X_{\bm \alpha}) \leq  \cF(\rho_{\bm \alpha}) $ directly leads to the quantum Cramer-Rao bound, as for any two positive operators such that $F(X_{\bm \alpha}) \leq  \cF(\rho_{\bm \alpha})$ one automatically has $F^{-1}(X_{\bm \alpha}) \geq \cF^{-1}(\rho_{\bm \alpha})$. Hence the following bound holds
\begin{equation}
\text{Cov}(\hat {\bm \alpha}) \geq \cF^{-1}(\rho_{\bm \alpha})
\end{equation}
assuming that the inverse exists.

\subsection{Tightness of the QFIM bound on FIM }

Let us briefly comment on the tightness of Eq.~\eqref{app a: F>cF}. In the derivation we used several inequalities that become saturated if
\begin{equation}
    \left(\text{Re}[ \tr \rho E_x \sum_\mu v_\mu L_\mu]\right)^2 \overset{!}{=} \tr E_x \rho \, \tr \rho (\sum_\mu v_\mu L_\mu) E_x (\sum_\nu v_\nu L_\nu)
\end{equation}
for all $x$. Denoting $L_{(\bm v)} = \sum_\mu v_\mu L_\mu$ gives a shorter condition
\begin{equation}
    \left(\text{Re}[ \tr \rho E_x L_{(\bm v)}]\right)^2 \overset{!}{=} \tr E_x \rho \, \tr \rho L_{(\bm v)} E_x L_{(\bm v)}.
\end{equation}
It is easy to see that the equality is  satisfied by setting $E_x = \Pi_x$, where $\Pi_x$ project on the eigespaces of $L_{(\bm v)} = \sum_x \ell_x \Pi_x$. If all the SLDs commute $[L_\mu, L_\nu]=0$ this measurement choice is compatible with all directions $\bm v F(X_{\bm \alpha}) \bm v^T = \bm v \cF(\rho_{\bm \alpha}) \bm v^T$, and leads to $F(X_{\bm \alpha}) =  \cF (\rho_{\bm \alpha})$. However, in general  a measurement saturating $F(X_{\bm \alpha}) \geq \cF (\rho_{\bm \alpha})$ does not  exist.\\

\subsection{Attainability of the QFIM with local measurement for GHZ states}
\label{app: GHZ local meas}

Let us now focus on the specific case of our interest where the encoding is unitary $\rho_{\bm \alpha}=\ketbra{\psi_{\bm \alpha}}$ with $ \ket{\psi_{\bm \alpha}} = U_{\bm \alpha} \ket{\psi}$, where 
    $U_{\bm\alpha} = \exp\left( - \ii\, T \sum_\mu \alpha_\mu G_\mu \right)$
and the generators  commute  $[G_\mu, G_\nu]=0$. The derivatives of the state thus read
\begin{equation}\label{app a: derivative unitary}
\partial_\mu \rho_{\bm \alpha} = - \ii \, T [G_\mu, \rho_{\bm \alpha}] = T \, U_{\bm \alpha} \left( \ketbra{\partial_\mu \psi}{\psi} +  \ketbra{\psi}{\partial_\mu \psi}\right) U_{\bm \alpha}^\dag 
\end{equation}
with $\ket{\partial_\mu \psi} = - \ii G_\mu \ket{\psi}$. Hence the SLD equation~\eqref{app a: SLD} becomes
\begin{equation}\label{app a: SLD 2}
    \ketbra{\partial_\mu \psi}{\psi} +  \ketbra{\psi}{\partial_\mu \psi} = \frac{1}{2 T} \left( \ketbra{\psi} \tilde L_\mu + \tilde L_\mu  \ketbra{\psi} \right) 
\end{equation}
for $\tilde L_\mu = U_{\bm \alpha}^\dag L_\mu U_{\bm \alpha}$.
This is equation is straightforward to solve by defining component of $\ket{\partial_\mu \psi}$ orthogonal to $\ket{\psi}$  as 
\begin{equation}
    (\id - \ketbra{\psi}) \ket{\partial_\mu \psi} = - \ii (\id - \ketbra{\psi}) G_\mu \ket{\psi} = \beta_\mu \ket{\psi_\mu^\perp}
\end{equation} where $\beta_\mu=\sqrt{\bra{\psi}G_\mu (\id - \ketbra{\psi}) G_\mu \ket{\psi}}$ is a positive scalar. 
It is straightforward then to verify that Eq.~\eqref{app a: SLD 2} is solved by the operator
\begin{equation} \label{eq: SLD app}    
\tilde L_\mu = 2  \ii\,  T \beta_\mu \left(\ketbra{\psi}{\psi_\mu^\perp}- \ketbra{\psi_\mu^\perp}{\psi} \right).
\end{equation}
For the QFIM this yields
\begin{equation}\label{appeq: QFI pure}
    \cF_{\mu \nu}(\rho_{\bm \alpha})  = \frac{1}{2} \tr \rho_{\bm \alpha} \{L_\mu, L_\nu\} = \frac{1}{2} \bra{\psi}  \{\tilde L_\mu, \tilde L_\nu\} \ket{\psi}
    = 4 T^2 \text{Re}\left[ \bra{\psi} G_\mu G_\nu \ket{\psi} - \bra{\psi} G_\mu \ket{\psi} \bra{\psi} G_\nu \ket{\psi} \right].
\end{equation}

Notably, even though the generators commute $[G_\mu, G_\nu]$, this is not necessarily the case for the SLDs $[\tilde L_\mu, \tilde L_\nu ]\neq 0$ as one can see from Eq.~\eqref{eq: SLD app}. 

Nevertheless, if the initial state is of GHZ type, i.e. a superposition of two eigenstates of the generators e.g. $\ket{\psi} = \frac{1}{\sqrt{2}}\left( \ket{\bm k}+ \ket{-\bm k}\right)$, all the SLD operators are proportional to
\begin{equation}
\tilde L_\mu \propto \sigma_Y^L = \ii \ketbra{-\bm k}{\bm k} - \ii \ketbra{- \bm k}{\bm k}.
\end{equation}
This follows from the fact $\ket{\psi_\mu^\perp} = \ket{\psi^\perp} \frac{1}{\sqrt{2}}\left( \ket{\bm k} -  \ket{-\bm k} \right)$ for all generators,
since $\ket{\psi^\perp}$ is  the unique state orthogonal to $\ket{\psi}$ in the subspace $\mathcal{H}_L$ spanned by $\ket{\bm k}$ and $\ket{- \bm k}$. Furthermore, since the state $\ket{\bm k}$ and $\ket{- \bm k}$ are product over different sensors, on the subspace $\mathcal{H}_L$ the observable $\sigma_Y^L$ can be measured locally. This is done by measuring $\ii \ketbra{-k_j}{k_j} - \ii \ketbra{k_j}{-k_j}$ on each sensor and combining the outcomes.

%%%%%%%%%%%%%%%%%%%%%%%%%%%%%%%%%%%%%%%%%%%%%%%%%%%%%%%%%%%%%%%%%%
%%%%%%%%%%%%%%%%%%%%%%%%%%%%%%%%%%%%%%%%%%%%%%%%%%%%%%%%%%%%%%%%%%

\subsection{QFIM for pure states within the DFS}\label{appendix:QFIM}

Let us now compute the QFIM elements in Eq. \eqref{appeq: QFI pure}
\begin{equation}
 \mathcal{F}_{ij} = 4 T^2 \mathrm{Re}(\underbrace{\langle\psi\vert G_i^\mathrm{eff} G_j^\mathrm{eff}\left\vert\psi\right\rangle}_{\mathrm{I}_{ij}})- 4 T^2 \underbrace{\langle\psi\vert G_i^\mathrm{eff}\left\vert\psi\right\rangle}_{\mathrm{II}_{i}}\underbrace{\langle\psi\vert G_j^\mathrm{eff}\left\vert\psi\right\rangle}_{\mathrm{II}_{j}}.
\end{equation}
with the help of the vector representation of eigenvalues introduced in the main text $G_i^\mathrm{eff}\left\vert\bm k\right\rangle  = \boldsymbol{e}_i^\intercal\hat{S} \boldsymbol{k} \vert \bm k \rangle$.
Note that the eigenvalues real, and for the term $\mathrm{I}_{ij}$ identified in the last equation we then get
\begin{equation}
 \mathrm{I}_{ij} = \sum_{\boldsymbol{k} \in \mathrm{kernel}(\hat{N})} \left\vert c_{\boldsymbol{k}}\right\vert^2(\boldsymbol{e}_i^\intercal\hat{S} \boldsymbol{k}) (\boldsymbol{k}^\intercal \hat{S}^\intercal \boldsymbol{e}_j) = \boldsymbol{e}_i^\intercal\hat{S} \left( \sum_{\boldsymbol{k} \in \mathrm{kernel}(\hat{N})} \left\vert c_{\boldsymbol{k}}\right\vert^2 \boldsymbol{k} \boldsymbol{k}^\intercal\right)\hat{S}^\intercal \boldsymbol{e}_j.
\end{equation}
The other term reads
 \begin{equation}    \mathrm{II}_i= \left( \sum_{\boldsymbol{k} \in \mathrm{kernel}(\hat{N})} \left\vert c_{\boldsymbol{k}}\right\vert^2 \boldsymbol{e}_i^\intercal\hat{S} \boldsymbol{k} \right),
\end{equation}
 and by introducing $\boldsymbol{\bar{k}}= \sum_{\boldsymbol{k} \in \mathrm{kernel}(\hat{N})} \left\vert c_{\boldsymbol{k}}\right\vert^2 \boldsymbol{k}$ it can be cast into a simpler form
 \begin{equation}    \mathrm{II}_i= \boldsymbol{e}_i^\intercal\hat{S} \boldsymbol{\bar{k}}.
\end{equation}

Finally, noting that $\mathrm{II}_i \mathrm{II}_j =\boldsymbol{e}_i^\intercal   \hat{S} (\boldsymbol{\bar{k}}\boldsymbol{\bar{k}}^\intercal ) \hat{S}^\intercal \boldsymbol{e}_j$, and introducing the matrix
\begin{equation}
 \hat{K} = \sum_{\boldsymbol{k} \in \mathrm{kernel}(\hat{N})} \left\vert c_{\boldsymbol{k}}\right\vert^2 \boldsymbol{k} \boldsymbol{k}^\intercal - \boldsymbol{\bar{k}}\boldsymbol{\bar{k}}^\intercal
\end{equation}
we get for the QFIM elements
\begin{align}
 \mathcal{F}_{ij}&= 4 T^2 (\mathrm{I}_{ij} -\mathrm{II}_i \mathrm{II}_j )
 = 4 T^2 \boldsymbol{e}_i^\intercal\hat{S} \left( \sum_{\boldsymbol{k} \in \mathrm{kernel}(\hat{N})} \left\vert c_{\boldsymbol{k}}\right\vert^2 \boldsymbol{k} \boldsymbol{k}^\intercal - \boldsymbol{\bar{k}}\boldsymbol{\bar{k}}^\intercal\right)\hat{S}^\intercal \boldsymbol{e}_j \\
 &= 4 T^2 \boldsymbol{e}_i^\intercal\hat{S} 
 \hat{K} \hat{S}^\intercal \boldsymbol{e}_j.
\end{align}
The QFIM of our bit-flip assisted strategies is thus given by a compact expression
\begin{equation}
     \mathcal{F} = 4T^2 \hat{S} \hat{K} \hat{S}^\intercal \label{equ:QFIM},
\end{equation}
where $\hat{K}$ is a real positive matrix that is very helpful to compare different strategies.

\subsection{QFIM for states not in the DFS}\label{appendix:QFIM-not-in-dfs}
Notice that final state 
\begin{align}
\rho(\boldsymbol{\alpha})&= \mathcal{E}_{\boldsymbol{\alpha}}(\mathcal{D}(\rho_{\mathrm{init}}))
\intertext{is given by first applying the noise}
    \mathcal{D}(\rho) &=  \int_{\boldsymbol{\beta}} p(\boldsymbol{\beta}) U_{\boldsymbol{\beta}}\rho U_{\boldsymbol{\beta}}^\dagger\\
    U_{\boldsymbol{\beta}} &= \exp\left(-iT \sum_{i} \beta_i G_i^{\text{noise}} \right)\\
    \intertext{and then the parameter encoding}
    \mathcal{E}(\rho) &= U_{\boldsymbol{\alpha}} \rho U_{\boldsymbol{\alpha}}^\dagger \\
    U_{\boldsymbol{\alpha}} &= \exp\left(-iT \sum_{i} \alpha_i G_i \right).
\end{align}
Therefore the QFIM for a state not in the DFS can be computed by computing the QFIM for the mixed state $\mathcal{D}(\rho_{\mathrm{init}})$.

\subsubsection{QFIM for mixed states}\label{appendix:khat-mixed}
\begin{align}
    \intertext{Given a mixed state $\rho=\sum_{i,j} c_{i,j} \left\vert \boldsymbol{k}_i \right\rangle \langle \boldsymbol{k}_j \vert$ with a spectral decomposition of the components matrix $c = U P U^\dagger$ and $P=\mathrm{diag}(p_1,p_2,...)$ being a diagonal matrix. The spectral decomposition is given by}
    \rho &= \sum_i p_i  \left\vert i \right\rangle \langle i \vert = \sum_{i, \nu,\eta} U_{\nu,i} p_i U^*_{\eta,i}  \left\vert \boldsymbol{k}_\nu \right\rangle \langle \boldsymbol{k}_\eta \vert\\
    \left\vert i \right\rangle &= \sum_{\nu} U_{\nu,i} \left\vert \boldsymbol{k}_\nu \right\rangle\\
    \intertext{Then the QFIM components for the parameters encoded by $G_a$ and $G_b$ can be computed using the equation from~\cite{LiuYuanLuWang2019} }
    \mathcal{F}_{a,b} &= 2 \sum_{i,j} \frac{(p_i-p_j)^2}{p_i+p_j} \langle i \vert G_a \left\vert j\right\rangle\langle j \vert G_b \left\vert i\right\rangle\\
    \langle i \vert G_a \left\vert j\right\rangle &= \langle i \vert G_a \sum_{\nu} U_{\nu,j} \left\vert \boldsymbol{k}_\nu\right\rangle \\
    &= \langle i \vert \sum_\nu T \boldsymbol{e}_a^\intercal \hat{S} \boldsymbol{k}_\nu U_{\nu,j}\left\vert \boldsymbol{k}_\nu \right\rangle\\
    &= T \sum_{\nu,\eta} \langle \boldsymbol{k}_\eta \vert U^*_{\eta,i} \boldsymbol{e}_a^\intercal \hat{S} \boldsymbol{k}_\nu U_{\nu,j}\left\vert \boldsymbol{k}_\nu \right\rangle\\
    &= T \sum_{\nu,\eta} U^*_{\eta,i} \boldsymbol{e}_a^\intercal \hat{S} \boldsymbol{k}_\nu U_{\nu,j}\vert   \langle \boldsymbol{k}_\eta \left\vert  \boldsymbol{k}_\nu \right\rangle \\
    &=  T \boldsymbol{e}_a^\intercal \hat{S} \sum_{\nu}  U^*_{\nu,i} U_{\nu,j} \boldsymbol{k}_\nu \\
    \mathcal{F}_{a,b} &= 2 \sum_{i,j,\nu,\eta} \frac{(p_i-p_j)^2}{p_i+p_j}   \underbrace{T \boldsymbol{e}_a^\intercal \hat{S} U^*_{\nu,i} U_{\nu,j} \boldsymbol{k}_\nu}_{\langle i \vert G_a \left\vert j\right\rangle} 
    \underbrace{T U^*_{\eta,i} U_{\eta,j} \boldsymbol{k}_\eta^\intercal  \hat{S}^\intercal\boldsymbol{e}_b}_{\langle j \vert G_b \left\vert i\right\rangle}\\
    &= 2 T^2 \boldsymbol{e}_a^\intercal \hat{S}  \underbrace{\sum_{i,j,\nu,\eta} \frac{(p_i-p_j)^2}{p_i+p_j}    U^*_{\nu,i} U_{\nu,j}  
    U^*_{\eta,i} U_{\eta,j} \boldsymbol{k}_\nu \boldsymbol{k}_\eta^\intercal}_{2 \hat{K}}  \hat{S}^\intercal\boldsymbol{e}_b\\
    \hat{K} &= \frac{1}{2}\sum_{i,j,\nu,\eta} \frac{(p_i-p_j)^2}{p_i+p_j} U^*_{\nu,i} U_{\nu,j} U^*_{\eta,i} U_{\eta,j} \boldsymbol{k}_\nu \boldsymbol{k}_\eta^\intercal.
\end{align}

\section{Infinitely strong noise}
\label{sec:infiniteNoise}
Given any pure initial state $\left\vert \psi \right\rangle = \sum_{\boldsymbol{k}} c_{\boldsymbol{k}} \left\vert \boldsymbol{k}\right\rangle$ and noise fields fluctuation infinitely strong and independently, the noise 
leads to a state  
\begin{align}
    \rho &= \mathcal{D}(\left\vert
    \psi\right\rangle)\\
    &= \bigoplus_{\boldsymbol{\kappa}} P_{\mathrm{DFS}_{\bm \kappa}} \left\vert \psi \right\rangle\langle \psi \vert P_{\mathrm{DFS}_{\bm \kappa}} ,
    \intertext{which is block diagonal with respect to the different affine DFS. Therefore the QFIM reduces to the QFIM within each affine DFS.}
    \intertext{This can be seen by writing $\rho$ in the $\left\vert\boldsymbol{k}\right\rangle$ basis}
    \rho_{\boldsymbol{k},\boldsymbol{k'}} &= \langle \boldsymbol{k}\vert \int_{\boldsymbol{\beta}} p(\boldsymbol{\beta}) U_{\boldsymbol{\beta}} \left\vert \psi \right\rangle\langle\psi \vert U^\dagger d\boldsymbol{\beta}\left\vert\boldsymbol{k}'\right\rangle\\
    &= \int_{\boldsymbol{\beta}}  p(\boldsymbol{\beta}) c_{\boldsymbol{k}}c^*_{\boldsymbol{k'}} \exp{\left(-i T \boldsymbol{\beta}^\intercal \hat{N} \boldsymbol{k}\right)} \exp{\left(i T \boldsymbol{\beta}^\intercal \hat{N} \boldsymbol{k'}\right)} d\boldsymbol{\beta}\\
    &= \int_{\boldsymbol{\beta}}  p(\boldsymbol{\beta}) c_{\boldsymbol{k}}c^*_{\boldsymbol{k'}} \exp{\left(-i T \boldsymbol{\beta}^\intercal \hat{N} (\boldsymbol{k}-\boldsymbol{k'})\right)} d\boldsymbol{\beta}\\
    &= c_{\boldsymbol{k}}c^*_{\boldsymbol{k'}} \int_{\boldsymbol{\beta}}  p(\boldsymbol{\beta})  \exp{\left(-i T \boldsymbol{\beta}^\intercal \hat{N} (\boldsymbol{k}-\boldsymbol{k'})\right)} d\boldsymbol{\beta}\\
    \intertext{As by assumption the noise fluctuations are infinitely strong, the integral is only non-zero if the exponent is zero.  }
    &=  \begin{cases}
        c_{\boldsymbol{k}}c^*_{\boldsymbol{k'}} &\text{ if } \mathrm{DFS}_\kappa,\\
        0 &\text{ else }
    \end{cases}
\end{align}

%%%%%%%%%%%%%%%%%%%%%%%%%%%%%%%%%%%%%%%%%%%%%%%%%%%%%%%%%%%%%%%%%%
%%%%%%%%%%%%%%%%%%%%%%%%%%%%%%%%%%%%%%%%%%%%%%%%%%%%%%%%%%%%%%%%%%

\section{Construction extremal strategy}\label{appendix:extremal}
\subsection{Step 1: Symmetrization}\label{appendix:symmetrization}
Notice that the DFS always contains pairs of $(\boldsymbol{k},\boldsymbol{-k})$, which map to the same GHZ state. In order to get rid of this redundancy we introduce the label set $\kappa_+\subset \kappa$, which for $\boldsymbol{k}\in \kappa$ either contains $\boldsymbol{k}$ or $-\boldsymbol{k}$, but not both. Therefore $\kappa_+$ uniquely labels the required GHZ states. 
There is a rather mathematical special case if $\kappa$ contains the zero vector. The corresponding GHZ state is not well defined, which can be fixed, but is then anyhow not useful for sensing. Therefore the zero vector is not included into $\kappa_+$. 
\begin{lemma}\label{lemma:superpositon-GHZ}
    Given an arbitrary strategy $\left\vert\psi\right\rangle = \sum_{\boldsymbol{k}\in\kappa\subset\mathrm{DFS}} c'_{\boldsymbol{k}}\left\vert\boldsymbol{k}\right\rangle$ within the DFS, then there exits a superposition of GHZ states
    $$\left\vert \phi \right\rangle = \sum_{\boldsymbol{k}\in\kappa_+\subset\mathrm{DFS}} \sqrt{|c'_{\boldsymbol{k}}|^2+|c'_{\boldsymbol{-k}}|^2}\left\vert \mathrm{GHZ}_{\boldsymbol{k}}\right\rangle,$$ 
    such that $\mathcal{F}_{\phi}\geq \mathcal{F}_{\psi}$, i.e. the QFIM is not worse.
\end{lemma}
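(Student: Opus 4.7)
The plan is to prove this directly at the level of the $\hat K$ matrix, appealing to the fact established in Section~III.A that $\hat K_\phi \geq \hat K_\psi$ implies $\mathcal{F}_\phi \geq \mathcal{F}_\psi$ for every choice of signal matrix $\hat S$. So the whole task reduces to computing $\hat K_\phi - \hat K_\psi$ and showing it is positive semidefinite.

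First I would expand $|\phi\rangle$ in the product basis $\{|\bm k\rangle\}$. Using $|\mathrm{GHZ}_{\bm k}\rangle = (|\bm k\rangle + |-\bm k\rangle)/\sqrt 2$ and the convention that $\kappa_+$ contains exactly one of $\bm k, -\bm k$ (and excludes the zero vector), the amplitude of $|\bm k\rangle$ in $|\phi\rangle$ is $c^\phi_{\bm k} = \sqrt{|c'_{\bm k}|^2 + |c'_{-\bm k}|^2}/\sqrt 2$, with the convention $c'_{\bm k}=0$ if $\bm k \notin \kappa$. The key observation is that $|c^\phi_{\bm k}|^2 = |c^\phi_{-\bm k}|^2$, i.e.\ the probabilities are symmetric under $\bm k \leftrightarrow -\bm k$. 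A one-line computation then confirms normalization: $\langle\phi|\phi\rangle = \sum_{\bm k \in \kappa_+} (|c'_{\bm k}|^2 + |c'_{-\bm k}|^2) = \sum_{\bm k \in \kappa} |c'_{\bm k}|^2 = 1$.

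Next I would plug these coefficients into the formula~\eqref{equ:K2} for $\hat K_\phi$. The mean vector collapses by the $\bm k \leftrightarrow -\bm k$ symmetry:
\begin{equation}
\bar{\bm k}_\phi = \sum_{\bm k} |c^\phi_{\bm k}|^2\, \bm k = \sum_{\bm k \in \kappa_+} \tfrac{|c'_{\bm k}|^2 + |c'_{-\bm k}|^2}{2}\,(\bm k + (-\bm k)) = 0,
\end{equation}
and the second moment reduces to $\sum_{\bm k \in \kappa_+}(|c'_{\bm k}|^2 + |c'_{-\bm k}|^2)\,\bm k \bm k^\intercal = \sum_{\bm k \in \kappa} |c'_{\bm k}|^2 \bm k \bm k^\intercal$, using $(-\bm k)(-\bm k)^\intercal = \bm k\bm k^\intercal$. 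Hence $\hat K_\phi = \sum_{\bm k \in \kappa} |c'_{\bm k}|^2 \bm k\bm k^\intercal$, whereas from~\eqref{equ:K2} applied to $|\psi\rangle$ we have $\hat K_\psi = \sum_{\bm k \in \kappa} |c'_{\bm k}|^2 \bm k\bm k^\intercal - \bar{\bm k}_\psi \bar{\bm k}_\psi^\intercal$. Therefore
\begin{equation}
\hat K_\phi - \hat K_\psi = \bar{\bm k}_\psi \bar{\bm k}_\psi^\intercal \geq 0,
\end{equation}
a rank-one positive semidefinite matrix, which by the signal-independent monotonicity argument yields $\mathcal{F}_\phi \geq \mathcal{F}_\psi$.

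There is no genuine obstacle here; the only thing one has to be careful about is the bookkeeping around the set $\kappa_+$ — in particular, the case where $\bm k \in \kappa$ but $-\bm k \notin \kappa$ must be handled by extending $c'_{\bm k}$ by zero, and the zero vector must be kept out of $\kappa_+$. The conceptual content is simply that symmetrization sets the first moment $\bar{\bm k}$ to zero, and because $\hat K$ in~\eqref{equ:K2} depends on $\bar{\bm k}$ only through the subtracted term $-\bar{\bm k}\bar{\bm k}^\intercal$, killing this term can only increase $\hat K$ and hence the QFIM.
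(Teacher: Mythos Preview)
Your proof is correct and follows essentially the same approach as the paper's: both reduce the comparison to the $\hat K$ matrices, expand $|\phi\rangle$ in the $|\bm k\rangle$ basis, observe that the symmetrized coefficients give $\bar{\bm k}_\phi = 0$, identify $\hat K_\phi$ with the second-moment term $\sum_{\bm k}|c'_{\bm k}|^2 \bm k \bm k^\intercal$, and conclude $\hat K_\phi - \hat K_\psi = \bar{\bm k}_\psi \bar{\bm k}_\psi^\intercal \geq 0$. Your extra bookkeeping remarks about extending $c'_{\bm k}$ by zero and excluding the zero vector from $\kappa_+$ are a useful clarification but do not change the argument.
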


\begin{proof}\label{proof:superposition-GHZ} First, we notice that $\left\vert\phi\right\rangle$ is a valid normalised state. 

Secondly, we show that the QFIM is not worse:
\begin{align}
    \mathcal{F}_\phi&\geq \mathcal{F}_\psi \iff  \mathcal{F}_\phi - \mathcal{F}_\psi \geq 0 \\
    & \iff S \left( \hat{K}_{\phi} - \hat{K}_{\psi}\right)  S^\intercal \geq 0 \\
    \intertext{Therefore comparing the $\hat{K}$ matrices is sufficient}
    \intertext{Lets start by writing $\left\vert\phi\right\rangle$ in the $\left\vert\boldsymbol{k}\right\rangle$ basis}
    \left\vert \phi \right\rangle &= \sum_{\boldsymbol{k}\in\kappa} \frac{\sqrt{|c_{\boldsymbol{k}}|^2+|c_{\boldsymbol{-k}}|^2}}{\sqrt{2}}\left\vert \boldsymbol{k}\right\rangle\\
    \bar{k}_{\phi} &= \sum_{\boldsymbol{k}} \frac{\left\vert c_{\boldsymbol{k}}\right\vert^2 + \left\vert c_{-\boldsymbol{k}}\right\vert^2}{2} \boldsymbol{k} = 0\\
    \hat{K}_{\phi} &= \sum_{\boldsymbol{k}} \frac{\left\vert c_{\boldsymbol{k}}\right\vert^2 + \left\vert c_{-\boldsymbol{k}}\right\vert^2}{2} \boldsymbol{k}\boldsymbol{k}^\intercal\\
    \hat{K}_{\psi} &= \underbrace{\sum_{\boldsymbol{k}} \left\vert c_{\boldsymbol{k}}\right\vert^2  \boldsymbol{k}\boldsymbol{k}^\intercal}_{I} - \bar{k}_{\psi}\bar{k}_{\psi}^\intercal
    \intertext{Notice that $\boldsymbol{k}\boldsymbol{k}^\intercal=(\boldsymbol{-k})(\boldsymbol{-k})^\intercal$}
    I &= \sum_{\boldsymbol{k}} \left\vert c_{\boldsymbol{k}}\right\vert^2  \left(\frac{\boldsymbol{k}\boldsymbol{k}^\intercal}{2}+\frac{(\boldsymbol{-k})(\boldsymbol{-k})^\intercal}{2}\right)\\
    &= \sum_{\boldsymbol{k}} \left\vert c_{\boldsymbol{k}}\right\vert^2  \frac{\boldsymbol{k}\boldsymbol{k}^\intercal}{2}+\sum_{\boldsymbol{k}} \left\vert c_{\boldsymbol{k}}\right\vert^2\frac{(\boldsymbol{-k})(\boldsymbol{-k})^\intercal}{2}\\
    &= \sum_{\boldsymbol{k}} \left\vert c_{\boldsymbol{k}}\right\vert^2  \frac{\boldsymbol{k}\boldsymbol{k}^\intercal}{2}+\sum_{\boldsymbol{k}} \left\vert c_{\boldsymbol{-k}}\right\vert^2\frac{\boldsymbol{k}\boldsymbol{k}^\intercal}{2}\\
    &= \sum_{\boldsymbol{k}} \frac{\left\vert c_{\boldsymbol{k}}\right\vert^2+ \left\vert c_{\boldsymbol{-k}}\right\vert^2}{2} \boldsymbol{k}\boldsymbol{k}^\intercal = \hat{K}_{\phi}
    \intertext{Which finally leads to}
    \hat{K}_{\phi} - \hat{K}_{\psi} &= \bar{k}_{\psi}\bar{k}_{\psi}^\intercal \geq 0 
\end{align}
\end{proof}

\subsection{Step 2: Sequentiallization}\label{appendix:sequential}
\begin{lemma}\label{lemma:sequential-GHZ}
        Given a superposition of GHZ states 
        $$\left\vert \phi \right\rangle = \sum_{\boldsymbol{k}\in\kappa_+} c_{\boldsymbol{k}} \left\vert \mathrm{GHZ}_{\boldsymbol{k}} \right\rangle$$
        then the QFIM can be achieved by individually/sequentially probing $\left\vert \mathrm{GHZ}_{\boldsymbol{k}}\right\rangle$ with probability $\left\vert c_{\boldsymbol{k}}\right\vert^2$.
\end{lemma}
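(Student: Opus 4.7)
My plan is to compute both sides -- the QFIM of the coherent superposition $\ket{\phi}$ and the QFIM of the sequential mixture -- and verify that they coincide via the $\hat K$-matrix formalism already established in Eqs.~\eqref{equ:K1}--\eqref{equ:K2}. The sequential side is almost immediate: since the state $\bigoplus_{\bm k \in \kappa_+} |c_{\bm k}|^2 \ketbra{\mathrm{GHZ}_{\bm k}}$ is block-diagonal (with classical label indicating which GHZ was prepared), its SLDs decompose block-wise and the QFIM becomes the convex combination $\cF_{\mathrm{seq}} = \sum_{\bm k \in \kappa_+} |c_{\bm k}|^2\, \cF_{|\mathrm{GHZ}_{\bm k}\rangle}$. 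Plugging in the single-GHZ value $\cF_{|\mathrm{GHZ}_{\bm k}\rangle} = 4T^2 \hat S\,\bm k \bm k^\intercal\, \hat S^\intercal$ from appendix~\ref{appendix:QFIM} yields $\cF_{\mathrm{seq}} = 4T^2 \hat S\, \hat K_{\mathrm{seq}}\, \hat S^\intercal$ with $\hat K_{\mathrm{seq}} = \sum_{\bm k \in \kappa_+} |c_{\bm k}|^2 \bm k \bm k^\intercal$.

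For the coherent side, I would first expand $\ket{\phi}$ in the product basis,
\begin{equation*}
\ket{\phi} = \sum_{\bm k \in \kappa_+} \frac{c_{\bm k}}{\sqrt 2}\bigl(\ket{\bm k}+\ket{-\bm k}\bigr) = \sum_{\bm k \in \kappa}\frac{\tilde c_{\bm k}}{\sqrt 2}\ket{\bm k},
\end{equation*}
where $\kappa = \kappa_+ \cup (-\kappa_+)$ and $\tilde c_{\bm k} := c_{\bm k}$ on $\kappa_+$, $\tilde c_{-\bm k} := c_{\bm k}$ on $-\kappa_+$. Since distinct labels in $\kappa$ give orthogonal product states, Eq.~\eqref{equ:K2} applies directly. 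The symmetrization makes $|\tilde c_{\bm k}|^2 = |\tilde c_{-\bm k}|^2$, which forces $\bar{\bm k} = \sum_{\bm k \in \kappa}\frac{|\tilde c_{\bm k}|^2}{2}\bm k = 0$ by pairing opposite vectors. The subtracted rank-one term in Eq.~\eqref{equ:K2} therefore vanishes, and the surviving sum also pairs up into $\hat K_\phi = \sum_{\bm k \in \kappa}\frac{|\tilde c_{\bm k}|^2}{2}\bm k\bm k^\intercal = \sum_{\bm k \in \kappa_+}|c_{\bm k}|^2\, \bm k \bm k^\intercal$, using that $\bm k \bm k^\intercal = (-\bm k)(-\bm k)^\intercal$. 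This matches $\hat K_{\mathrm{seq}}$ exactly, so $\cF_\phi = \cF_{\mathrm{seq}}$.

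I do not expect a serious obstacle; the result is essentially a bookkeeping identity. What is worth highlighting for intuition is \emph{why} coherence between GHZ branches contributes nothing to the Fisher information here. Using the observation that $G_\mu \ket{\mathrm{GHZ}_{\bm k}} = (\bm e_\mu^\intercal \hat S \bm k)\cdot\tfrac{1}{\sqrt 2}(\ket{\bm k}-\ket{-\bm k})$, one sees that each generator acts within the two-dimensional sector spanned by $\{\ket{\bm k},\ket{-\bm k}\}$ and flips $\ket{\mathrm{GHZ}_{\bm k}}$ to its orthogonal partner in that sector. For $\bm k \neq \bm k'$ in $\kappa_+$ the sectors are mutually orthogonal, so all off-diagonal matrix elements $\langle \mathrm{GHZ}_{\bm k'}|G_\mu|\mathrm{GHZ}_{\bm k}\rangle$ and $\langle \mathrm{GHZ}_{\bm k'}|G_\mu G_\nu|\mathrm{GHZ}_{\bm k}\rangle$ vanish, which is the structural reason why mixing and superposing lead to the same QFIM. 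Thus a fully alternative proof route would be to substitute directly into the pure-state formula of appendix~\ref{appendix:QFIM}, noting that $\langle \phi|G_\mu|\phi\rangle=0$ and $\langle\phi|G_\mu G_\nu|\phi\rangle = \sum_{\bm k \in \kappa_+}|c_{\bm k}|^2 (\bm e_\mu^\intercal \hat S \bm k)(\bm e_\nu^\intercal \hat S \bm k)$, arriving at the same $\cF_\phi$ without going through $\hat K$.
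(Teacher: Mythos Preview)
Your proposal is correct and follows essentially the same route as the paper: compute the sequential QFIM as the convex combination of single-GHZ QFIMs, expand $\ket{\phi}$ in the $\ket{\bm k}$ basis, and use the $\hat K$-formula together with $\bm k\bm k^\intercal=(-\bm k)(-\bm k)^\intercal$ to match the two expressions. You are slightly more explicit than the paper in isolating the step $\bar{\bm k}=0$, and your added intuition about the vanishing off-diagonal matrix elements of $G_\mu$ between different GHZ sectors is a nice complement, but the core argument is identical.
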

\begin{proof}\label{proof:sequential-GHZ} The initial state of a sequential protocol can be written as a direct sum $\bigoplus_i p_i \rho_i$
The QFIM in this case is given as a sum over the blocks.
\begin{align}
    \mathcal{F}_{\mathrm{seq}} &= \sum_i \mathcal{F}_{\rho_i}
    \intertext{Hence for the GHZ states we get}
    &= 4 T^2 \sum_{\boldsymbol{k}} \left\vert c_{\boldsymbol{k}}\right\vert^2 S \boldsymbol{k}\boldsymbol{k}^\intercal S^\intercal = 4 T^2 S  \sum_{\boldsymbol{k}} \left\vert c_{\boldsymbol{k}}\right\vert^2 \boldsymbol{k}\boldsymbol{k}^\intercal S^\intercal
    \intertext{For the superposition of GHZ states we get in the $\left\vert\boldsymbol{k}\right\rangle$ basis}
    \left\vert\phi\right\rangle &= \sum_{\boldsymbol{k}} \frac{c_{\boldsymbol{k}}}{\sqrt{2}} \left\vert\boldsymbol{k}\right\rangle + \frac{c_{\boldsymbol{k}}}{\sqrt{2}}\left\vert-\boldsymbol{k}\right\rangle
    \intertext{Therefore using (\ref{equ:QFIM})}
    \mathcal{F}_{\mathrm{sup}} &= 4T^2 S\sum_{\boldsymbol{k}} \left\vert c_{\boldsymbol{k}}\right\vert^2 \frac{\boldsymbol{k}\boldsymbol{k}^\intercal+(-\boldsymbol{k})(-\boldsymbol{k})^\intercal)}{2}S^\intercal\\
    &=\mathcal{F}_{\mathrm{seq}}
\end{align}
\end{proof}

\subsection{Step 3: Vertices}\label{appendix:vertices}
\begin{lemma}\label{lemma:sequential-vertex-GHZ}
    Given a vector $\boldsymbol{k}\in DFS$ and a convex decomposition $\boldsymbol{k}=\sum_i p_i \boldsymbol{v}_i$ into vertices $\boldsymbol{v}_i$ of the DFS.
    Let $\left\vert \psi\right\rangle$ be the into $\boldsymbol{k}$ directed GHZ state and $\left\vert \phi \right\rangle = \bigoplus_i \sqrt{p_i} \left\vert\mathrm{GHZ}_{\boldsymbol{v}_i}\right\rangle$ be the corresponding sequential vertex GHZ strategy.
    Then the QFIM of the GHZ is improved or equal
    $$K_{\psi}\leq K_\phi \text{ and }\mathcal{F}_{\psi}\leq \mathcal{F}_\phi$$ by the sequential vertex GHZ strategy.
\end{lemma}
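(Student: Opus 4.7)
My plan is to reduce the statement to a matrix inequality on the $\hat K$ matrices and then invoke the observation (already established in the main text) that $\hat K_1 \geq \hat K_2$ implies $\cF_1 \geq \cF_2$.

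First I would record the two $\hat K$ matrices explicitly. For the single GHZ state $\ket{\psi}=\ket{\mathrm{GHZ}_{\bm k}}$, Eq.~\eqref{equ:K1} with a single summand $r=1$ gives $\hat K_\psi = \bm k \bm k^\intercal$. For the sequential vertex strategy $\ket{\phi}=\bigoplus_i \sqrt{p_i}\ket{\mathrm{GHZ}_{\bm v_i}}$, Lemma~\ref{lemma:sequential-GHZ} (the sequentiallization step) together with Eq.~\eqref{equ:K1} yields $\hat K_\phi = \sum_i p_i \bm v_i \bm v_i^\intercal$. So the lemma reduces to showing
\begin{equation}
\sum_i p_i\, \bm v_i \bm v_i^\intercal \;\geq\; \bm k \bm k^\intercal \qquad \text{whenever } \bm k = \sum_i p_i \bm v_i,\ p_i\geq 0,\ \sum_i p_i = 1.
\end{equation}

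Next I would prove this matrix inequality by testing it against an arbitrary vector $\bm x$. On the left we get $\sum_i p_i (\bm v_i^\intercal \bm x)^2$, on the right $(\bm k^\intercal \bm x)^2 = \big(\sum_i p_i (\bm v_i^\intercal \bm x)\big)^2$. The desired inequality is then exactly Jensen's inequality applied to the convex function $t\mapsto t^2$ and the probability distribution $\{p_i\}$ (equivalently a one-line Cauchy--Schwarz: $(\sum_i \sqrt{p_i}\cdot \sqrt{p_i}\,(\bm v_i^\intercal\bm x))^2 \leq (\sum_i p_i)(\sum_i p_i(\bm v_i^\intercal \bm x)^2)$). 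Since this holds for every $\bm x$, $\hat K_\phi - \hat K_\psi \succeq 0$.

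Finally, I would invoke the observation from the beginning of Sec.~\ref{sec:results}: for any two strategies $\hat K_\phi \geq \hat K_\psi$ implies $\hat S(\hat K_\phi - \hat K_\psi)\hat S^\intercal = \bigl(\hat S\sqrt{\hat K_\phi-\hat K_\psi}\bigr)\bigl(\hat S\sqrt{\hat K_\phi-\hat K_\psi}\bigr)^\intercal \geq 0$, hence $\cF_\phi \geq \cF_\psi$ for every signal matrix $\hat S$. This closes the proof.

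The argument is essentially a one-liner once the right formulation is in place; the only mild subtlety is the bookkeeping in the first step, namely confirming that the sequential direct-sum construction of $\ket{\phi}$ really yields $\hat K_\phi = \sum_i p_i \bm v_i \bm v_i^\intercal$ (with no $\bar{\bm k}\bar{\bm k}^\intercal$ subtraction, since the block-diagonal structure kills the cross terms). That is exactly the content of the sequentiallization lemma, so no extra work is needed. I do not anticipate a genuine obstacle; the Jensen step is the only nontrivial ingredient, and it is a classical fact about positive-semidefinite matrices.
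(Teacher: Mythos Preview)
Your proposal is correct and follows essentially the same approach as the paper: both reduce to the matrix inequality $\sum_i p_i\,\bm v_i\bm v_i^\intercal \geq \bm k\bm k^\intercal$ for $\bm k=\sum_i p_i\bm v_i$, and then invoke the $\hat K$-ordering $\Rightarrow$ QFIM-ordering observation from Sec.~\ref{sec:results}. The only cosmetic difference is in how the core inequality is checked: the paper rewrites the difference algebraically as $\sum_i p_i(\bm v_i-\bm k)(\bm v_i-\bm k)^\intercal$, a manifestly positive sum of rank-one terms, whereas you sandwich with an arbitrary $\bm x$ and invoke Jensen/Cauchy--Schwarz --- two phrasings of the same ``variance is nonnegative'' fact.
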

\begin{proof}\label{proof:sequential-vertex-GHZ} Notice that, the DFS is a convex set in $d$ dimensions. 
Due to Carathéodory's Theorem   
\begin{align}
    \boldsymbol{k} &= \sum_{i=1}^{d+1} p_i \boldsymbol{v}_i
    \intertext{every point in this polytope can be decomposed into a convex combination of $d+1$ vertices. }
    \intertext{Therefore it remains to show that}
    K_{\psi}&\leq K_\phi \\
    \Leftrightarrow 0 &\leq \sum_i p_i \boldsymbol{v}_i\boldsymbol{v}_i^\intercal - \boldsymbol{k}\boldsymbol{k}^\intercal  \\
    &= \sum_i p_i \boldsymbol{v}_i\boldsymbol{v}_i^\intercal - \boldsymbol{k}\boldsymbol{k}^\intercal  - \boldsymbol{k}\boldsymbol{k}^\intercal  + \boldsymbol{k}\boldsymbol{k}^\intercal \\
    &= \sum_i p_i \boldsymbol{v}_i\boldsymbol{v}_i^\intercal  - \sum_i p_i \boldsymbol{v}_i\boldsymbol{k}^\intercal  - \sum_i p_i \boldsymbol{k}\boldsymbol{v}_i^\intercal + \sum_i p_i \boldsymbol{k}\boldsymbol{k}^\intercal \\
    &= \sum_i p_i (\boldsymbol{v}_i-\boldsymbol{k})(\boldsymbol{v}_i-\boldsymbol{k})^\intercal
\end{align}
Notice that convex sum of positive operators are positive and projectors are positive operators.
\end{proof}

\subsection{Sequential vertex GHZ strategies are extremal}\label{appendix:vertex-GHZ-extremal}
\begin{lemma}\label{lemma:vertex-GHZ-extremal}
    Sequential vertex GHZ strategy $$\left\vert \psi \right\rangle = \bigoplus_i \sqrt{p_i} \left\vert\mathrm{GHZ}_{\boldsymbol{v}_i}\right\rangle$$ are extremal.
\end{lemma}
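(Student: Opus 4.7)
The plan is a proof by contradiction built on the three-step reduction of Appendices~\ref{appendix:symmetrization}--\ref{appendix:vertices}. Suppose some noise-insensitive strategy $\varrho'$ achieves $\cF_{\varrho'} > \cF_\psi$. Applying the symmetrization, sequentialization, and vertex-decomposition steps to $\varrho'$ in turn yields a vertex-sequential GHZ strategy $\varrho''$ whose QFIM dominates that of $\varrho'$, giving $\cF_{\varrho''} \geq \cF_{\varrho'} > \cF_\psi$. Consequently it suffices to restrict attention to vertex-sequential competitors, and I may take $\varrho' = \bigoplus_i q_i \ketbra{\mathrm{GHZ}_{\bm v_i}}$ supported on the same DFS vertex set as $\psi$, padding with zero weights where needed.

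Writing both strategies as probability distributions $p$ and $q$ over the vertex set, the hypothesis $\cF_{\varrho'} > \cF_\psi$ becomes the matrix inequality $\hat S\, \hat M\, \hat S^\intercal > 0$, where $\hat M = \sum_i (q_i - p_i)\bm v_i \bm v_i^\intercal$ satisfies $\sum_i (q_i - p_i) = 0$. For a generic (full-rank) signal matrix this is equivalent to $\hat M \geq 0$ with $\hat M \neq 0$; the degenerate case reduces to the same statement after restricting to $\mathrm{range}(\hat S^\intercal)$. The task therefore becomes purely algebraic: show that $\hat M \geq 0$ is incompatible with $\sum_i \delta_i = 0$ (where $\delta_i = q_i - p_i$) unless $\hat M = 0$ identically, which then yields the desired contradiction with $\cF_{\varrho'} > \cF_\psi$.

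The main obstacle is this last step, which must crucially exploit the vertex property of the $\bm v_i$. My plan is to invoke the supporting hyperplane at each vertex: for every vertex pair $\pm \bm v_k$ of the DFS there exists a direction $\bm w_k$ with $(\bm w_k^\intercal \bm v_k)^2 > (\bm w_k^\intercal \bm v)^2$ for every $\bm v \in \mathrm{DFS}\setminus\{\pm \bm v_k\}$. Testing the PSD inequality $\bm w_k^\intercal \hat M \bm w_k \geq 0$ against each such $\bm w_k$ and combining the resulting linear relations using $\sum_i \delta_i = 0$ should force every $\delta_i$ to vanish. A clean packaging of this argument is to construct a single positive-definite matrix $A \succ 0$ on which $\bm v_i^\intercal A \bm v_i$ takes the same constant value $c$ for every vertex $\bm v_i$: then $\mathrm{tr}(A \hat M) = c \sum_i \delta_i = 0$, while $A \succ 0$ together with $\hat M \geq 0$ forces $\hat M = 0$, yielding the contradiction. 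Exhibiting such an $A$, for instance by averaging a positive-definite seed over the symmetry group of the DFS polytope or by a dimension count in the affine subspace of symmetric matrices cut out by the equal-value constraints, is the technical point on which the proof hinges.
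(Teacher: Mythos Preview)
Your reduction to the comparison of two vertex-sequential strategies via Lemmas~\ref{lemma:superpositon-GHZ}--\ref{lemma:sequential-vertex-GHZ} is sound and mirrors the paper's setup. The gap is in the final algebraic step, and neither of your two proposed closures actually goes through as written.

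For approach (b), the existence of a positive-definite $A$ with $\bm v_i^\intercal A\bm v_i=c$ for every vertex is exactly the statement that all vertices of the DFS lie on a common origin-centred ellipsoid. For a $d$-dimensional DFS the space of symmetric $A$ has only $\binom{d+1}{2}$ parameters, while the number of vertex pairs of a hypercube slice can far exceed this; the equal-value constraints are then overdetermined and generically inconsistent. Your two suggested constructions do not rescue this: averaging a seed over the symmetry group yields a constant $\bm v_i^\intercal A\bm v_i$ only when the group acts transitively on vertex pairs, which it need not (already for $\hat N=(2,1,1)$ the DFS is a quadrilateral whose symmetry group fixes each vertex pair); and a dimension count says nothing about positive definiteness of the solution. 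So the ``clean packaging'' is not available in general.

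Approach (a) is closer to what the paper does, but you stop at ``should force every $\delta_i$ to vanish''. From $\bm w_k^\intercal\hat M\bm w_k\geq 0$ you get one inequality $\delta_k+\sum_{i\neq k}\delta_i s_{ki}\geq 0$ per vertex with $s_{ki}\in[0,1)$; together with $\sum_i\delta_i=0$ this system does \emph{not} by itself force $\delta=0$, since the $s_{ki}$ carry no structure tying the inequalities together. The paper closes this by testing at the vertex direction $\bm v_l$ itself rather than at an abstract supporting normal $\bm w_l$, so that the diagonal ratio is exactly $1$ and all off-diagonal ratios satisfy $(\bm k\cdot\bm v_l)^2/|\bm v_l|^4<1$; picking $l$ with $\delta_l<0$ and bounding each ratio by $1$ then collapses the sum to $\sum_i\delta_i=0$, giving the contradiction $0\leq\ldots<0$ in one line. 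The missing idea in your sketch is precisely this direct use of $\bm v_l$ as the test vector together with the normalization $\sum_i\delta_i=0$ to produce a strict inequality, rather than an unspecified ``combination'' of supporting-hyperplane tests.
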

\begin{proof}\label{proof:vertex-GHZ-extremal} The Proof of Lemma~\ref{lemma:vertex-GHZ-extremal}

 We start with 
\begin{align}
    \left\vert \phi \right\rangle &= \bigoplus_i \sqrt{p_i} \left\vert\mathrm{GHZ}_{\boldsymbol{v}_i}\right\rangle
    \intertext{a being a vertex sequential strategy where we assume w.l.o.g that $\boldsymbol{v}_i$ are unique meaning  $i\neq j \implies \boldsymbol{v}_i\neq\boldsymbol{v}_j$  and that $\boldsymbol{v}_i \neq -\boldsymbol{v}_j$ }
    \left\vert \psi \right\rangle &= \bigoplus_j \sqrt{\Tilde{p}_j} \left\vert\mathrm{GHZ}_{\boldsymbol{k}_j}\right\rangle
    \intertext{being any sequential GHZ strategy w.l.o.g we can assume that $\boldsymbol{k}_i$ are unique in the same way. Additionally we require}
    \hat{K}_\psi \geq \hat{K}_\phi &\Leftrightarrow \hat{K}_\psi - \hat{K}_\phi \geq 0\\
    &\implies \sum_{j} \Tilde{p}_j \boldsymbol{k}_j\boldsymbol{k}_j^\intercal -
    \sum_{i} p_i \boldsymbol{v}_i\boldsymbol{v}_i^\intercal\geq 0 \label{equ:proof:vertex-GHZ-extremal:condition}
    \intertext{Now we will look at all cases that can happen. The proof is success full when the outcome of the cases is either $\left\vert \phi \right\rangle\sim \left\vert \psi \right\rangle$ or a contradiction.\newline
    \textbf{Case 1}: There exits $\boldsymbol{v}_l\not\in \{\boldsymbol{k}_i\}_i$} 
        0 &\leq \boldsymbol{v}_l^\intercal \left(\sum_{j} \Tilde{p}_j \boldsymbol{k}_j\boldsymbol{k}_j^\intercal -
        \sum_{i} p_i \boldsymbol{v}_i\boldsymbol{v}_i^\intercal\right) \boldsymbol{v}_l\\
        &= - p_l \vert \boldsymbol{v}_l\vert^4 + \sum_{j} \Tilde{p}_j (\boldsymbol{k}_j^\intercal\boldsymbol{v}_l)^2 -
        \sum_{i\neq l} p_i (\boldsymbol{v}_i^\intercal\boldsymbol{v}_l)^2 \\
        &\implies 0 \leq -p_l  + \sum_{j} \Tilde{p}_j \underbrace{\frac{(\boldsymbol{k}_j^\intercal\boldsymbol{v}_l)^2}{\vert \boldsymbol{v}_l\vert^4}}_{<1} -
        \sum_{i\neq l} p_i \underbrace{\frac{(\boldsymbol{v}_i^\intercal\boldsymbol{v}_l)^2}{\vert \boldsymbol{v}_l\vert^4}}_{<1}\\
        &< -p_l + \sum_{j} \Tilde{p}_j  -
        \sum_{i\neq l} p_i = 1 - 1 = 0
        \intertext{\textbf{Case 2}: The strategies $\left\vert\psi\right\rangle$ contains $\{\boldsymbol{k}_i\}_i \supseteq  \{\boldsymbol{v}_i\}_i$ more or the same GHZ states \newline
        We can join the sums where $\boldsymbol{k}_i=\boldsymbol{v}_{i'}$ } 
        (\ref{equ:proof:vertex-GHZ-extremal:condition})&\implies 0\leq \sum_{\{i|\boldsymbol{k}_i = \boldsymbol{v}_{i'}\}} (\Tilde{p}_i-p_{i'}) \boldsymbol{v}_{i'}\boldsymbol{v}_{i'}^\intercal +
        \sum_{\{i|\boldsymbol{k}_i \neq \boldsymbol{v}_{j}\forall j\}} \Tilde{p}_i \boldsymbol{k}_i\boldsymbol{k}_i^\intercal\\
        \intertext{\textbf{Case 2a:} There exits $\Tilde{p}_j \neq p_{j'}$ and $\boldsymbol{k}_j=\boldsymbol{v}_{j'}$, then as they both sumup to 1 there exits $\Tilde{p}_l < p_{l'}$ and $\boldsymbol{k}_l=\boldsymbol{v}_{l'}$}
       &\implies 0 \leq  \sum_{\{i|\boldsymbol{k}_i = \boldsymbol{v}_{i'}\}} (\Tilde{p}_i-p_{i'}) (\boldsymbol{v}_{l'}^\intercal\boldsymbol{v}_{i'})^2 + \sum_{\{i|\boldsymbol{k}_i \neq \boldsymbol{v}_{j}\forall j\}} \Tilde{p}_i (\boldsymbol{k}_i^\intercal\boldsymbol{v}_{l'})^2 \\
       &= (\Tilde{p}_l-p_{l'}) \vert \boldsymbol{v}_{l'}\vert^4 + \sum_{\{i\neq l|\boldsymbol{k}_i = \boldsymbol{v}_{i'}\}} (\Tilde{p}_i-p_{i'}) (\boldsymbol{v}_{l'}^\intercal\boldsymbol{v}_{i'})^2 + \sum_{\{i|\boldsymbol{k}_i \neq \boldsymbol{v}_{j}\forall j\}} \Tilde{p}_i (\boldsymbol{k}_i^\intercal\boldsymbol{v}_{l'})^2 \\
       &\implies 0 \leq (\Tilde{p}_l-p_{l'}) + \sum_{\{i\neq l|\boldsymbol{k}_i = \boldsymbol{v}_{i'}\}} (\Tilde{p}_i-p_{i'}) \underbrace{\frac{(\boldsymbol{v}_{l'}^\intercal\boldsymbol{v}_{i'})^2}{\vert \boldsymbol{v}_{l'}\vert^4}}_{<1} + 
       \sum_{\{i|\boldsymbol{k}_i \neq \boldsymbol{v}_{j}\forall j\}} \Tilde{p}_i \underbrace{\frac{(\boldsymbol{k}_i^\intercal\boldsymbol{v}_{l'})^2}{\vert \boldsymbol{v}_{l'}\vert^4}}_{<1}\\
       &< (\Tilde{p}_l-p_{l'}) + \sum_{\{i\neq l|\boldsymbol{k}_i = \boldsymbol{v}_{i'}\}} (\Tilde{p}_i-p_{i'}) + 
       \sum_{\{i|\boldsymbol{k}_i \neq \boldsymbol{v}_{j}\forall j\}} \Tilde{p}_i =  1 - 1 = 0
        \intertext{Which is a contradiction to being positive.\newline
        \textbf{Case 2b:} $\forall i: \Tilde{p}_i = p_i'$}
        \implies \left\vert \phi \right\rangle \sim \left\vert \psi \right\rangle
\end{align}
\end{proof}

\subsection{Trace criteria for extremal strategies}\label{appendix:max_trace}
In certain cases, e.g. when doing numerical optimization, a simpler condition would be nice to check if a strategy is extremal.
\begin{lemma}\label{lemma:maximal-trace-extremal}
If $\mathrm{tr}(\hat{K})$ of a protocol is maximal, then the protocol is extremal with respect to the partial ordering.
$$\mathrm{tr}(\hat K') = \mathrm{tr}(\hat K) \text{ and } \hat K'\geq \hat K \implies \hat K' = \hat K$$
\end{lemma}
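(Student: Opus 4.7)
The plan is to reduce the extremality claim to the elementary linear-algebra fact that a positive semidefinite matrix with vanishing trace must itself vanish. Recall that a strategy is extremal when no competing strategy $\hat K'$ satisfies $\hat K' > \hat K$, i.e.\ $\hat K' \geq \hat K$ with $\hat K' \neq \hat K$. So it is enough to show that any admissible $\hat K'$ with $\hat K' \geq \hat K$ must actually equal $\hat K$, which is precisely the implication displayed in the lemma.

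The first step is to note that the Loewner ordering $\hat K' - \hat K \geq 0$ forces $\mathrm{tr}(\hat K' - \hat K)\geq 0$, because a positive semidefinite matrix has nonnegative eigenvalues whose sum is its trace. Combining this with the standing hypothesis that $\mathrm{tr}(\hat K)$ is maximal among admissible noise-insensitive strategies, so in particular $\mathrm{tr}(\hat K') \leq \mathrm{tr}(\hat K)$, we conclude $\mathrm{tr}(\hat K') = \mathrm{tr}(\hat K)$, i.e.\ the trace of the difference vanishes.

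The second step is to upgrade ``positive semidefinite with zero trace'' to ``zero matrix''. Diagonalising the Hermitian positive semidefinite matrix $A = \hat K' - \hat K$ yields nonnegative eigenvalues $\lambda_i \geq 0$ with $\sum_i \lambda_i = \mathrm{tr}(A) = 0$; hence every $\lambda_i$ is zero and therefore $A = 0$, i.e.\ $\hat K' = \hat K$. This concludes the extremality argument.

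I expect no real technical obstacle here since both ingredients are standard. The only thing worth verifying beforehand is that the hypothesis of maximal trace is nonvacuous and well-defined in the DFS setting: using $\mathrm{tr}(\hat K) = \sum_i r_i \|\bm k_i\|^2 - \|\bar{\bm k}\|^2$ together with the constraint $\|\bm k\|_\infty \leq 1$ built into the DFS, one sees that $\mathrm{tr}(\hat K) \leq n$ with equality attained, for instance, by any sequential GHZ strategy whose $\bm k_i$ point to vertices of the DFS with $\|\bm k_i\|_\infty = 1$ in every coordinate. This ensures the maximum is genuinely attained and the argument above applies.
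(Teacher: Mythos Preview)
Your proof is correct and follows essentially the same route as the paper's: both deduce $\mathrm{tr}(\hat K'-\hat K)=0$ from maximality and then conclude $\hat K'=\hat K$ because a positive semidefinite matrix with vanishing trace is zero (you argue via eigenvalues, the paper via summing $\bm v^\intercal(\hat K'-\hat K)\bm v$ over an orthonormal basis, which is the same thing). One small slip in your closing remark: equality $\mathrm{tr}(\hat K)=n$ is only guaranteed in the noiseless case, since DFS vertices in general need not have all coordinates $\pm 1$ (cf.\ the paper's non-integer example); this does not affect the proof proper.
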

This is particularly useful if the maximal trace is known. For example in the noiseless case the norm of all vertices is $\sqrt{n}$ and the maximal trace is the number of sensors $n$. 
\begin{proof}
We will proof this lemma by contradiction and therefore assume that there exits a protocol $K'\neq K$ such that
\begin{align}
    \hat K' \geq \hat K &\Leftrightarrow \bm v^\intercal \hat K' - \hat K \bm v \geq 0 \forall \bm v.
    \intertext{Now let us look at the trace}
    \mathrm{tr}(\hat{K}') - \mathrm{tr}(\hat{K}) &= \mathrm{tr}(\hat{K}' - \hat{K}) \geq 0
    \intertext{Notice that $ \mathrm{tr}(\hat{K}' - \hat{K})> 0$ is a direct contradiction to the assumption that $\mathrm{tr}(\hat{K})$ is maximal, therefore}
    \mathrm{tr}(\hat{K}') &=  \mathrm{tr}(\hat{K}) \\
    \intertext{Notice that the trace is the same in all othogonal basis and therefore}
    \sum_{v\in ONB} \bm v^\intercal \hat{K}' - \hat{K} \bm = 0
    \intertext{for all ONB. Notice that this is a sum over positive values and hence all summands are zero.}
    &\implies \bm v ^\intercal  \hat{K}' - \hat{K} \bm v = 0 \forall \bm v \\
    &\implies \hat{K} = \hat{K'}
    \intertext{Which is a contradiction to the assumption $K'\neq K$}
\end{align}
\end{proof}

%%%%%%%%%%%%%%%%%%%%%%%%%%%%%%%%%%%%%%%%%%%%%%%%%%%%%%%%%%%%%%%%%%
%%%%%%%%%%%%%%%%%%%%%%%%%%%%%%%%%%%%%%%%%%%%%%%%%%%%%%%%%%%%%%%%%%

\section{Optimal rates for sequential strategies with orthogonal labels}\label{appendix:orthogonal_labels}
In this section compute the optimal rates for sequential protocols with orthogonal label vectors by using Lagrange multipliers analog to~\cite{bringewattProtocolsEstimatingMultiple2021}.
Therefore we will assume that $\hat{S}$ is invertible. In practice this might not be the case and a pseudo inverse of $\hat S$ projected onto the DFS has to be used. 
If these assumptions are fulfilled the QFIM

\begin{align}
    \cF_\rho^{-1} &= \frac{1}{4 T^2}\hat S^{-1} \sum_i \frac{\bm v_i \bm v_i^\intercal}{r_i |\bm v_i|^4} (\hat S^\intercal)^{-1},
    \intertext{is analytically invertible. The figure of Merit}
    M&= \tr W \cF_\rho^{-1}
    \intertext{simplifies with $W'={S^\intercal}^{-1}WS^{-1}$ to}
    &= \sum_i \frac{1}{|\bm v_i|^4 4 T^2 r_i}\bm v_i^\intercal W' \bm v_i \\
    \intertext{with $w'_i = \bm v_i^\intercal W' \bm v_i/|\bm v_i|^4$}
    &= \frac{1}{4T^2} \sum_i \frac{w_i}{r_i}
    \intertext{Leaving the optimization problem}
    &\min_{r_i} \sum_i \frac{w'_i}{r_i}
    \text{ with }
    \sum_i r_i = 1
\intertext{
To minimize $\sum_{r_i} \frac{w'_i}{r_i}$ under the constraint that $r_i$ is a probability distribution we use the Lagrange multiplies  and minimize $\sum_{i} \frac{w'_i}{r_i} + \lambda (\sum_i r_i -1)$. Deriving with respect to $r_i$ gives}
   &-\frac{w'_i}{r_i^2} + \lambda = 0,
\intertext{or $r_i = \sqrt{\frac{\lambda}{w'_i}}$. By imposing normalization we find that}
r_i &= \frac{1}{\sqrt{w'_i}}\left(\sum_j \frac{1}{\sqrt{w'_j}}\right)^{-1}
\intertext{and}
    M_{opt} = \frac{1}{4T^2} \min_{r_i} \sum_{i}  \frac{w'_i}{r_i} &=  \frac{1}{4T^2} (\sum_i {w'_i}^{3/2}) (\sum_i {w'_i}^{-1/2}).
\end{align}

%%%%%%%%%%%%%%%%%%%%%%%%%%%%%%%%%%%%%%%%%%%%%%%%%%%%%%%%%%%%%%%%%%
%%%%%%%%%%%%%%%%%%%%%%%%%%%%%%%%%%%%%%%%%%%%%%%%%%%%%%%%%%%%%%%%%%

\section{Second Example for the DFS}
\begin{figure}
    \centering
    \includegraphics[width=0.5\linewidth]{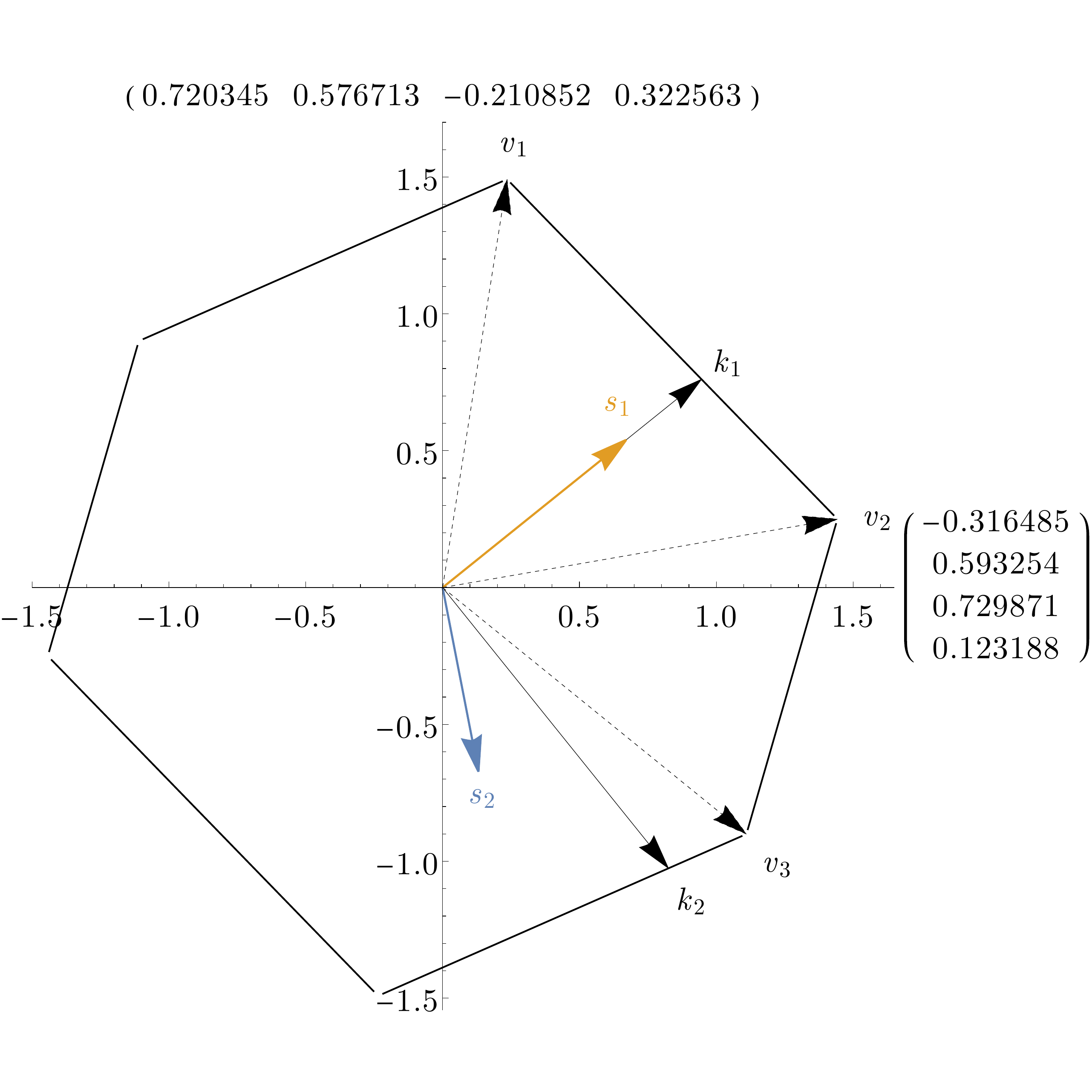}
    \caption{Decoherence free subspace of the non integer example.}
    \label{fig:example-non-integer}
\end{figure}
We will consider an example of a four sensor network with signal matrix $S=\left(\begin{smallmatrix}0.47& 0.47& 0.70& 0.24\\-0.97& 0.24& -0.34& -0.57\end{smallmatrix}\right)$ and noise matrix $N=\left(\begin{smallmatrix}-0.09& -0.36& 0.10& 0.91\\0.80& -0.73& 0.92& 0.12\end{smallmatrix}\right)$. 
The DFS is shown in fig~\ref{fig:example-non-integer} and is given by the convex combinations of 
$\boldsymbol{v}_1=\left(1,1,-0.14,0.51\right)^\intercal$,
$-\boldsymbol{v}_1$,
$\boldsymbol{v}_2=\left(-0.28,1,1,0.26\right)^\intercal$, 
$-\boldsymbol{v}_2$,
$\boldsymbol{v}_3=\left(-1,0.14,1,-0.15\right)^\intercal$ 
and $-\boldsymbol{v}_3$. 

\subsection{Extremal strategy}
The extremal strategy is computed for a given state. Therefore we will consider a simple strategy with a sufficient complexity to show all optimization steps. We will choose $\boldsymbol{k}_1=\boldsymbol{s}_1^{\mathrm{DFS}}=\left(0.25,1,0.53,0.36\right)^\intercal$ to be signal component within the DFS normalized by the infinity norm and $\boldsymbol{k}_2=\left(-1,-0.10,0.82,0.23\right)^\intercal\perp \boldsymbol{k}_1 $ to be normalized an perpendicular.
The strategy is
\begin{align*}
    \left\vert\psi\right\rangle 
    &= \sqrt{p}\left(\sqrt{\frac{1+\Delta}{2}} \vert \boldsymbol{k}_1 \rangle + \sqrt{\frac{1-\Delta}{2}} \left\vert -\boldsymbol{k}_1\right\rangle \right)\\&+\sqrt{1-p}\left(\sqrt{\frac{1+\Delta}{2}} \vert \boldsymbol{k}_2 \rangle + \sqrt{\frac{1-\Delta}{2}} \left\vert -\boldsymbol{k}_2\right\rangle \right),
\end{align*}
with $0\leq p,\Delta\leq 1$. This strategy can be implemented with an auxiliary qubit $\mathcal{A}$. The initialized state is
\begin{align*}&\sqrt{p}\left(\sqrt{\frac{1+\Delta}{2}} \vert0\rangle_{\mathcal{A}}\vert 0,0,0,0\rangle + \sqrt{\frac{1-\Delta}{2}} \left\vert 0\rangle_{\mathcal{A}}\vert 1,1,1,1\rangle\right\rangle \right)\\&+\sqrt{1-p}\left(\sqrt{\frac{1+\Delta}{2}} \vert1\rangle_{\mathcal{A}}\vert 0,0,0,0\rangle + \sqrt{\frac{1-\Delta}{2}} \left \vert1\rangle_{\mathcal{A}}\vert 1,1,1,1\right\rangle \right)\end{align*} where here 0 and 1 labels the computational basis.
Bit flips are then applied at the times $\left(0.64,1,0.76,0.68\right)T$ controlled on the auxiliary system being in $\vert0\rangle$ and $\left(0,0.45,0.91,0.39\right)T$ controlled on the auxiliary system being in $\vert1\rangle$.
Notice that $\vert\psi\rangle$ is within the $\mathcal{DFS}$ and hence protected from the noise.
Following (\ref{equ:K}) its $\hat{K}$ matrix is

\begin{align*}
    \hat{K} &= 
\left(\scalebox{0.65}{$
\begin{array}{cccc}
 \Delta ^2 (p (2.49\, -1.55 p)-1.)-0.93 p+1. & p \left(\Delta ^2 (1.22\, -1.37 p)+0.14\right)-0.10 \Delta ^2+0.10 & 0.81 \Delta ^2+p \left(\Delta ^2 (0.36 p-1.31)+0.95\right)-0.81 & \Delta ^2 (p (0.87\, -0.73 p)-0.22)-0.13 p+0.22 \\
 p \left(\Delta ^2 (1.22\, -1.37 p)+0.14\right)-0.10 \Delta ^2+0.10 & p \left(\Delta ^2 (0.22\, -1.21 p)+0.98\right)-0.01 \Delta ^2+0.01 & 0.08 \Delta ^2+p \left(\Delta ^2 (0.31 p-0.93)+0.61\right)-0.08 & p \left(\Delta ^2 (0.31\, -0.65 p)+0.33\right)-0.02 \Delta ^2+0.02 \\
 0.81 \Delta ^2+p \left(\Delta ^2 (0.36 p-1.31)+0.95\right)-0.81 & 0.08 \Delta ^2+p \left(\Delta ^2 (0.31 p-0.93)+0.61\right)-0.08 & p \left(\Delta ^2 (0.47\, -0.08 p)-0.38\right)-0.66 \Delta ^2+0.66 & \Delta ^2 ((0.17 p-0.55) p+0.18)+0.37 p-0.18 \\
 \Delta ^2 (p (0.87\, -0.73 p)-0.22)-0.13 p+0.22 & p \left(\Delta ^2 (0.31\, -0.65 p)+0.33\right)-0.02 \Delta ^2+0.02 & \Delta ^2 ((0.17 p-0.55) p+0.18)+0.37 p-0.18 & p \left(\Delta ^2 (0.27\, -0.34 p)+0.07\right)-0.05 \Delta ^2+0.05 \\
\end{array}$}
\right)
\intertext{and the QFIM }
    \mathcal{F} &= 4 T^2 \left(
\begin{array}{cc}
 1.09 p-1.09 \Delta ^2 p^2 & 1.23 \Delta ^2 p^2+\left(-0.83 \Delta ^2-0.40\right) p \\
 1.23 \Delta ^2 p^2+\left(-0.83 \Delta ^2-0.40\right) p & -0.63 \Delta ^2-1.40 \Delta ^2 p^2+\left(1.89 \Delta ^2-0.48\right) p+0.63 \\
\end{array}
\right).
\end{align*}
Following Lemma~\ref{lemma:superpositon-GHZ} we get an improved state with $\Delta=0$.  
Due to Lemma~\ref{lemma:sequential-GHZ}, the sequential strategy uses the GHZ states $\vert GHZ_{\boldsymbol{k}_1}\rangle$ and $\vert GHZ_{\boldsymbol{k}_2}\rangle$ with probabilities $p$ and $(1-p)$. 
These vectors can be decomposed into the vertices 
$$\boldsymbol{k}_1 = 0.41 \boldsymbol{v}_1 + 0.59 \boldsymbol{v}_2$$
and 
$$\boldsymbol{k}_1 = 0.21 \left(-\boldsymbol{v}_1\right) + 0.79 \boldsymbol{v}_3$$
Therefore we can use Lemma~\ref{lemma:sequential-vertex-GHZ} to get an improvement by using the sequential vertex protocol using $\vert GHZ_{\boldsymbol{v}_1}\rangle$, $\vert GHZ_{\boldsymbol{v}_2}\rangle$ and $\vert GHZ_{\boldsymbol{v}_3}\rangle$ with probabilities $0.21+0.2p,0.59p$ and $0.79-0.79p$. 
It performance is 
\begin{align*}
    \hat{K} &= \left(
\begin{array}{cccc}
 1.\, -0.54 p & 0.14 p+0.10 & 0.59 p-0.81 & 0.22\, -0.06 p \\
 0.14 p+0.10 & 0.77 p+0.22 & 0.45 p+0.07 & 0.27 p+0.09 \\
 0.59 p-0.81 & 0.45 p+0.07 & 0.79\, -0.19 p & 0.25 p-0.13 \\
 0.22\, -0.06 p & 0.27 p+0.09 & 0.25 p-0.13 & 0.07 p+0.07 \\
\end{array}
\right)
\intertext{ and  }
    \mathcal{F} &= 4 T^2\left(
\begin{array}{cc}
 0.84 p+0.24 & -0.32 p-0.04 \\
 -0.32 p-0.04 & 0.64\, -0.25 p \\
\end{array}
\right).
\end{align*}
as QFIM. Due to lemma~\ref{lemma:vertex-GHZ-extremal} the sequential vertex strategy is extremal and cannot be further improved.
This strategy can be implemented without auxiliary system. Each run a GHZ state is prepared and with probabilities $0.21+0.2p,0.59p$ and $0.79-0.79p$ the local bit flips for $\boldsymbol{v}_1$, $\boldsymbol{v}_2$ or $\boldsymbol{v}_3$ are applied. Summarizing the strategy performs better and is easier to implement as not controlled bit-flip nor distributed auxiliary systems are required.

\subsection{Optimal strategy}
To compute the optimal strategy in this example choose the weights $1$ and $1$ for the signals.
This gives
\begin{equation}
    \min_{p_1+p_2+p_3=1} \mathrm{tr}\left(\begin{pmatrix}
        1 & 0 \\
        0 & 1
    \end{pmatrix}\left(
    \hat{S}
    \sum_{i=1}^3 p_i \boldsymbol{v}_i\boldsymbol{v}_i^\intercal\hat{S}^\intercal
    \right)^{-1}\right)
\end{equation}
as optimization problem.
The optimal solution is then $M_{opt}=3.88$. This is achieved by the probabilities $(0.26,0.30,0.44)$. Optimizing the extreme strategy over $p$ yields $M_{extrem}=0.390$ for $p=0.46$. Optimizing the initial strategy gives $M_{init}=5.26$ with $\Delta=0$ and $p=0.42$.

%%%%%%%%%%%%%%%%%%%%%%%%%%%%%%%%%%%%%%%%%%%%%%%%%%%%%%%%%%%%%%%%%%
%%%%%%%%%%%%%%%%%%%%%%%%%%%%%%%%%%%%%%%%%%%%%%%%%%%%%%%%%%%%%%%%%%

\section{Affine DFS}\label{appendix:affineDFS}
\subsection{Example affine DFS is advantageous}
\begin{figure}
    \centering
    \includegraphics[width=0.4\columnwidth]{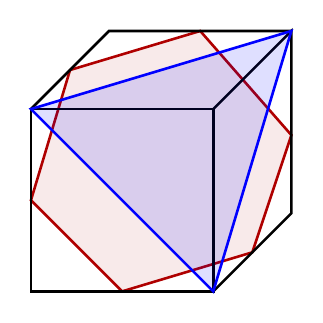}
    \caption{The standard DFS (red) and the affine one DFS' (blue). The vertices of the affine DFS are vertices of the original orthotope.}
    \label{fig:affine_DFS}
\end{figure}

Let us consider a three sensor network. Let there be one noise source $(1,1,1)$, which is constant. 
The DFS is then given by the polytope
\begin{align}
    \bm v_1 &= \begin{pmatrix}
    1 \\ 0 \\ -1
    \end{pmatrix} = - \bm v_4 &
    \bm v_2 &= \begin{pmatrix}
    -1 \\ 1 \\ 0
    \end{pmatrix} = - \bm v_5 &
      \bm v_3 &= \begin{pmatrix}
    0\\ -1 \\ 1
    \end{pmatrix} = - \bm v_6
\intertext{Additionally we will look at the affine $\mathrm{DFS'}=\mathrm{DFS}_{(1,0,0)}$, with the vertices}
    \bm v'_1 &= \begin{pmatrix}
    1 \\ 1 \\ -1
    \end{pmatrix}&
    \bm v'_2 &= \begin{pmatrix}
    -1 \\ 1 \\ 1
    \end{pmatrix}&
      \bm v'_3 &= \begin{pmatrix}
    1\\ -1 \\ 1
    \end{pmatrix}
\end{align}
These spaces are shown in Fig.~\ref{fig:affine_DFS}.

Now we will use the equal superposition of all vertices as strategy. This state is a W State~\cite{PhysRevA.62.062314} $1/{\sqrt 3}(|001\rangle + |010\rangle + |100\rangle)$ in standard notation, which obtains a non-zero global phase from a constant signal.

Thereby we get
\begin{align}
    K &= \left(
\begin{array}{ccc}
 \frac{2}{3} & -\frac{1}{3} & -\frac{1}{3} \\
 -\frac{1}{3} & \frac{2}{3} & -\frac{1}{3} \\
 -\frac{1}{3} & -\frac{1}{3} & \frac{2}{3} \\
\end{array}
\right)&
    K' &= \left(
\begin{array}{ccc}
 \frac{8}{9} & -\frac{4}{9} & -\frac{4}{9} \\
 -\frac{4}{9} & \frac{8}{9} & -\frac{4}{9} \\
 -\frac{4}{9} & -\frac{4}{9} & \frac{8}{9} \\
\end{array}
\right) = \frac{4}{3} K > K
\end{align}
as matrices. Even though this is just one example. Notice that no strategy in the standard, non-affine DFS can obtain $\tr K > 2$, therefore the affine DFS is more powerful.

\subsubsection{Single parameter estimation}
For the single parameter discussion we will use the signal $(1,-1,0)$.
The optimal state within the DFS is $\vert \mathrm{GHZ}_{\bm v_3}\rangle$ and obtains a QFI of $4 (\bm s \bm v_3)^2 = 16$. 
The equal superposition of vertices in $\mathrm{DFS'} = 4 \bm s ^\intercal K' \bm s = 10.67$. The optimal superposition of vertices is $\frac{\vert \bm v'_2\rangle + \vert \bm v'_3\rangle  }{\sqrt{2}}$ and reaches the same QFI as $\vert \mathrm{GHZ}_{\bm v_3}\rangle$. As expected~\cite{sekatskiOptimalDistributedSensing2020} the affine DFS is not advantageous in the single parameter case.

\subsection{Upper-bounding the advantage}\label{appendix:affineDFS:upperbound}
Given any strategy $\vert \psi \rangle = \sum_{\bm k \in \mathrm{DFS}_{\bm \kappa}} p_{\bm k} \vert \bm k \rangle$ in the affine DFS, we define the strategy $\vert \phi \rangle = \sum_{\bm k \in \mathrm{DFS}_{\bm \kappa}} p_{\bm k} \vert \frac{\bm k-\bm \bar{\bm k}}{2} \rangle$, where $\Bar{k}=\sum_{\bm k \in \mathrm{DFS}_{\bm \kappa}} p_{\bm k}\bm k$.
First we will show that $\vert\phi\rangle$ is a valid strategy within the DFS then we will show that $\hat{K}_\phi = \frac{\hat{K}_\psi}{4}$ and hence going to the affine DFS gives at most an improvement by a factor of 4.

To show that $\vert \phi \rangle$ is a valid strategy we will investigate properties of $\mathrm{DFS}{\bm \kappa}$ and its "opposite"  $\mathrm{DFS}{-\bm \kappa}$.
First notice that every element in the $\mathrm{DFS}_{\bm \kappa}$ can be decomposed $\bm k = \bm k_\parallel + \bm \kappa$ as a sum of a vector $\bm k_\parallel $ in the $\mathrm{Kernel}(\hat{N})$ and some orthogonal vector $\bm \kappa_\perp$ with $\hat{N}\bm \kappa=\bm \kappa$. In addition $\|\bm k \|_\infty = \|-\bm k \|_\infty$, and hence for every element $\bm k\in \mathrm{DFS}_{\bm \kappa}$ the inverse is in the opposite DFS $-\bm{k} \in \mathrm{DFS}_{-\bm \kappa}$.  
Additionally, for any two vector $\bm a\in \mathcal{DFS}_{\bm \kappa}$ and $\bm b \in \mathrm{DFS}_{-\bm \kappa}$ the average $\frac{\bm a+\bm b}{2}\in \mathrm{DFS} \forall$ is within the original DFS, i.e. $\hat{N}\frac{\bm a+\bm b}{2}=0$ and $ \left \|\frac{\bm a+\bm b}{2} \right\|_\infty \leq 1$. The norm condition is fulfilled because it is a convex combination. The kernel condition is fulfilled as $\bm a+\bm b=\bm a_\parallel +\bm \kappa +\bm b_\parallel -\bm \kappa = \bm a_\parallel+\bm b_\parallel$ the component orthogonal to the kernel is cancelled. Applying the last property to our strategy $\vert\phi\rangle,$ with $\bm a=\bm k$ and $\bm b=-\Bar{\bm k}$ shows that it is contained within the DFS.

Finally the the performance of $\vert\phi\rangle$ is given by

\begin{align}
    \hat{K}_\phi &= \sum_{\bm k} p_{\bm k} \left(\frac{\bm k - \Bar{\bm k}}{2}\right)\left(\frac{\bm k - \Bar{\bm k}}{2}\right)^\intercal \\
    &= \frac{1}{4}\left(\sum_{\bm k} p_{\bm k}\bm k\bm k^\intercal - \sum_{\bm k} p_{\bm k}\bm k\bar{\bm k}^\intercal -\bar{\bm k}\bar{\bm k}^\intercal+\bar{\bm k}\bar{\bm k}^\intercal\right)\\
     &= \frac{\hat{K}_\psi }{4}
\end{align}
As this holds for any strategy in $\mathrm{DFS}_\kappa$ the improvement by going into the affine DFS is at most 4. 

\end{document}